\def\BibTeX{{\rm B\kern-.05em{\sc i\kern-.025em b}\kern-.08em
    T\kern-.1667em\lower.7ex\hbox{E}\kern-.125emX}}
\newtheorem{theorem}{Theorem}
\newtheorem{lemma}[theorem]{Lemma}
\newcommand{\Nt}{N_{\sf{T}}}
\newcommand{\nt}{n_{\sf{T}}}
\newcommand{\am}{\alpha_{\sf{M}}}
\newcommand{\fdma}{\mathbf{f}_{\sf{DMA}}}
\newcommand{\x}{\sf{x}}
\newcommand{\y}{\sf{y}}
\newcommand{\mQ}{\mathcal{Q}}
\newcommand{\mbl}{m_{d,\ell}}
\newcommand{\fps}{\mathbf{f}_{\sf{PS}}}
\newcommand{\heff}{\mathbf{h}_{\sf{eff}}}
\newcommand{\qbl}{q_{d,\ell}}
\newcommand{\wbl}{w_{d,\ell}}
\newcommand{\wtbl}{\Tilde{w}_{d,\ell}}
\newcommand{\hbl}{h_{d,\ell}}
\newcommand{\sfj}{{\sf{j}}}
\DeclareMathOperator*{\argmax}{argmax}
\DeclareMathOperator*{\argmin}{argmin}
\begin{document}

\title{Hierarchical Codebook Design with Dynamic Metasurface Antennas for Energy-Efficient Arrays}

\author{Joseph Carlson, Miguel R. Castellanos, \IEEEmembership{Member, IEEE}, and \\ Robert W. Heath, Jr., \IEEEmembership{Fellow, IEEE}

\thanks{J. Carlson, M. R. Castellanos, and  R. W. Heath are with the Department of Electrical and Computer Engineering, North Carolina State University, Raleigh, NC 27606 USA (e-mail: jmcarls3@ncsu.edu; mrcastel@ncsu.edu; rwheathjr@ncsu.edu). This material is based upon work supported by the National Science Foundation under grant nos. NSF-ECCS-2153698, NSF-CCF-2225555, NSF-CNS-2147955 and is supported in part by funds from federal agencies and industry partners as specified in the Resilient \& Intelligent NextG Systems (RINGS) program. This work is also supported in part by a gift from Motorola Mobility, a Lenovo company.}
}

\maketitle

\begin{abstract}
Dynamic metasurface antennas (DMA) provide a solution to form compact, cost-effective, energy-efficient multiple-input-multiple output (MIMO) arrays. In this paper, we implement a practical hierarchical codebook with a realistic DMA design through electromagnetic simulations. We leverage existing DMA models to derive a novel method for enhancing the beamforming gain. We find that the proposed method provides better coverage and spectral efficiency results than prior methods. We also present and verify a new technique for creating wide beamwidths through the DMA and hierarchical codebook. Additionally, we use a detailed transmitter architecture model to determine the power consumption savings of the DMA compared to a typical phased array. The DMA largely outperforms a passive phased array in terms of spectral and energy efficiency due to high component loss from a high-resolution passive phase shifter. While the DMA provides lower spectral efficiency results than the active phased array, the DMA achieves a higher energy efficiency because of the significant power consumption for the active phase shifters. Therefore, we find that DMAs in a realistic wireless environment provide sufficient coverage and spectral efficiency compared to typical phased arrays while maintaining a substantially lower power consumption.

\end{abstract}

\begin{IEEEkeywords}
Dynamic metasurface antenna, codebook design, MIMO, energy efficiency
\end{IEEEkeywords}

\section{Introduction}

DMAs are a type of reconfigurable antenna that form densely-packed energy-efficient arrays \cite{SmithEtAlAnalysisWaveguideFedMetasurfaceAntenna2017}. The DMA consists of slots in a waveguide that radiate power from an initial waveguide excitation, forming a leaky-wave slot antenna. DMAs leverage metasurface technology in their slotted element design to enable tight spacing, and reconfigurable components to perform low-power beamforming. Unlike a traditional antenna that has a fixed radiation pattern, a DMA has an adaptive radiation pattern tuned through the reconfigurable component. This differs from a conventional radio frequency (RF) system, where antennas are connected to separate beamforming hardware like analog phase-shifters \cite{8503216}. Although  DMAs have desirable features from a power consumption and adaptive beamforming perspective \cite{BoyarskyEtAlElectronicallySteeredMetasurfaceAntenna2021,SmithEtAlAnalysisWaveguideFedMetasurfaceAntenna2017}, further research is needed to analyze the benefits and drawbacks of DMAs in a MIMO wireless system. 

\subsection{Prior work} 

Large antenna arrays have been proposed in massive MIMO and millimeter wave (mmWave) systems as a method to increase spectral efficiency \cite{MarzettaNoncooperativeCellularWirelessUnlimited2010,LarssonEtAlMassiveMIMONextGeneration2014,GaoEtAlEnergyEfficientHybridAnalogDigital2016}. Scaling the number of antennas increases the spatial multiplexing capabilities of the system to achieve higher data rates and also allows for radiated energy to be focused very precisely towards desired users \cite{LarssonEtAlMassiveMIMONextGeneration2014}. The increased flexibility, control and directivity of the radiation pattern is vital to compensate for path-loss in mmWave systems \cite{HeathEtAlOverviewSignalProcessingTechniques2016,XiaoEtAlSurveyMillimeterWaveBeamformingEnabled2022} and generally provides more energy-efficient communication as less radiated energy is wasted in undesired directions \cite{RusekEtAlScalingMIMOOpportunitiesChallenges2013}. Despite these advantages, a main challenge in realizing large antenna arrays is the associated increase in power consumption, hardware footprint and overall cost \cite{BjornsonEtAlMassiveMIMOSystemsNonIdeal2014,YangEtAlHardwareConstrainedMillimeterWaveSystems5G2019}. 

Hybrid and analog precoding can decrease the power consumption for an array by reducing the number of radio frequency (RF) chains from conventional digital precoding \cite{ParkEtAlDynamicSubarraysHybridPrecoding2017,AhmedEtAlSurveyHybridBeamformingTechniques2018,AlkhateebEtAlMIMOPrecodingCombiningSolutions2014}. Hybrid precoding has been analyzed for multiple different analog and digital precoding architectures to reduce power consumption, such as partially- \cite{HanEtAlLargescaleAntennaSystemsHybrid2015} and fully-connected phase shifter systems \cite{RialEtAlHybridMIMOArchitecturesMillimeter2016} and subarray architectures \cite{ParkEtAlDynamicSubarraysHybridPrecoding2017}. However, scaling hybrid and analog precoding for large antenna arrays will still consume significant amounts of power through active phase shifters. Instead of concentrating on the transmitter hardware, DMAs have been proposed as a solution to decrease system power consumption by beamforming with the adaptive antenna radiation pattern \cite{SleasmanEtAlWaveguideFedTunableMetamaterialElement2016}. Since DMAs enable beamforming with low-power low-profile reconfigurable components, transmitter hardware components and power consumption can be significantly reduced for large antenna arrays. 

DMAs are a type of reconfigurable antenna, which generally operate by tuning the resonant frequency of individual elements to elicit a desired radiated amplitude and phase for beamforming \cite{HauptLanaganReconfigurableAntennas2013,CostantineEtAlReconfigurableAntennasDesignApplications2015}. A common reconfigurable antenna design is to modify conventional antennas, like dipole or patch antennas, with a reconfigurable component \cite{BehdadSarabandiDualbandReconfigurableAntennaVery2006,PanagamuwaEtAlFrequencyBeamReconfigurableAntenna2006,PiazzaEtAlDesignEvaluationReconfigurableAntenna2008a}. More complex designs involve physically change the antenna geometry via mechanical devices \cite{MazloumanEtAlPatternReconfigurableSquareRing2011} and using materials like ferrites \cite{DixitPourushRadiationCharacteristicsSwitchableFerrite2000} or liquid crystal \cite{LiuLangleyLiquidCrystalTunableMicrostrip2008} to alter the radiated response of the antenna. We focus on the DMA implementation of reconfigurable antennas over other designs as DMAs typically provide higher radiation efficiency and leverage metasurface technology to allow for dense arrays due to the tightly spaced elements \cite{shlezinger2021dynamic}. The dense energy-efficient arrays made possible through DMAs makes them a great candidate to enable the large antenna arrays required for massive MIMO and mmWave systems.

A key challenge in enabling beamforming using a DMA is the element weight constraint. DMA element weights have a limited phase range of $[-\pi,0]$, whereas typical phase shifters provide the full $[0,2\pi]$ phase coverage \cite{SmithEtAlAnalysisWaveguideFedMetasurfaceAntenna2017,SleasmanEtAlWaveguideFedTunableMetamaterialElement2016}. The DMA element weight phases are also coupled to a unique amplitude curve that varies from 0 to 1 with the phase. Prior work has studied DMAs to overcome the DMA weight limitation and enable single-beam beamforming. For desired weights given by unit-amplitude complex-valued weights, the DMA beamforming method proposed in \cite{BowenEtAlOptimizingPolarizabilityDistributionsMetasurface2022,SmithEtAlAnalysisWaveguideFedMetasurfaceAntenna2017} maps the desired weights to the DMA weights with two formulations for minimizing the Euclidean distance of the DMA weights and minimizing the phase difference. Results with a simulated DMA show that both mapping techniques allow for the DMA beam pattern to be steered in a desired direction with a high degree of accuracy in spite of the limited DMA phase range. A physical DMA prototype has also been developed in \cite{BoyarskyEtAlElectronicallySteeredMetasurfaceAntenna2021} to use the mapping techniques experimentally and verify their effectiveness for beamsteering. Additional studies have focused on DMA array processing to surpress grating lobes by applying phase shifts to each waveguide \cite{BoyarskyEtAlGratingLobeSuppressionMetasurface2020} and to minimize side lobes through an optimization algorithm of the reconfigurable controls \cite{JohnsonEtAlSidelobeCancelingReconfigurableHolographic2015}. The foundational work on DMAs in \cite{SmithEtAlAnalysisWaveguideFedMetasurfaceAntenna2017,SleasmanEtAlWaveguideFedTunableMetamaterialElement2016,BowenEtAlOptimizingPolarizabilityDistributionsMetasurface2022,BoyarskyEtAlElectronicallySteeredMetasurfaceAntenna2021} provides methods and results to overcome the beamforming weight limitations imposed by DMA elements, showing the potential for DMAs to create low-power, energy-efficient, reconfigurable arrays.

Recent studies have developed precoding and signal processing algorithms for MIMO with DMAs as well \cite{ShlezingerEtAlDynamicMetasurfaceAntennasUplink2019,WangEtAlDynamicMetasurfaceAntennasMIMOOFDM2021,YouEtAlEnergyEfficiencyMaximizationMassive2022,HuangEtAlJointMicrostripSelectionBeamforming2022}. The precoding studies largely model DMAs as an antenna array with constrained beamforming weights introduced by the DMA characteristics. The waveguide architecture of DMAs is modeled by integrating the waveguide electromagnetic field propagation into the wireless channel. Through this DMA model, prior work has developed ways to optimize the DMA weights for uplink and multi-user MIMO-OFDM scenarios \cite{ShlezingerEtAlDynamicMetasurfaceAntennasUplink2019,WangEtAlDynamicMetasurfaceAntennasMIMOOFDM2021}. An algorithm has been developed to tune the DMA weights in a DMA-based precoding architecture to maximize energy efficiency in a multi-user MIMO system \cite{YouEtAlEnergyEfficiencyMaximizationMassive2022}. A hybrid DMA architecture has been proposed to combine digital precoding with DMA-based precoding \cite{HuangEtAlJointMicrostripSelectionBeamforming2022}. The DMA hybrid precoder was then designed to maximize spectral efficiency via a joint optimization of the digital and DMA weights. While useful for developing signal processing algorithms for DMAs, the aforementioned DMA model places assumptions on the DMA to ignore mutual coupling, element perturbation of the waveguide feed, and other physical non-idealities of the DMA \cite{ShlezingerEtAlDynamicMetasurfaceAntennasUplink2019,WangEtAlDynamicMetasurfaceAntennasMIMOOFDM2021,YouEtAlEnergyEfficiencyMaximizationMassive2022,HuangEtAlJointMicrostripSelectionBeamforming2022}. Additional research is needed to incorporate realistic DMA constraints into the precoder design to obtain algorithms that work in practice.

Energy efficiency is a key metric of performance for MIMO systems due to the large amount of integrated circuitry \cite{FengEtAlSurveyEnergyefficientWirelessCommunications2013,PrasadEtAlEnergyEfficiencyMassiveMIMOBased2017}. Despite the power consumption savings made possible by DMAs, there is still limited work comparing a DMA transmitter architecture to that of analog, digital or hybrid precoding architectures. The power consumption and component loss of the RF front end is heavily dependent on the transmitter architecture and MIMO application, making the development of an accurate hardware model a difficult task. Prior work on MIMO energy efficiency has implemented a realistic power consumption model to derive an expression for the optimal number of transmit antennas per base station to maximize energy efficiency \cite{HaEtAlEnergyEfficiencyAnalysisCircuit2013,DongEtAlEnergyEfficiencyAnalysisCircuit2016}, and show that energy efficiency is maximized for hundreds of antenna per base station in a multi-user environment \cite{BjornsonEtAlOptimalDesignEnergyEfficientMultiUser2015}. Additional research places emphasis on the hardware power consumption for different precoding architectures. A tradeoff between spectral and energy efficiency for active and passive phase shifters was analyzed in \cite{RibeiroEtAlEnergyEfficiencyMmWaveMassive2018,TsaiNatarajan60GHzPassiveActiveRFpath2009}, and switches have been shown to provide channel estimation performance equal to or better than phase shifters \cite{RialEtAlHybridMIMOArchitecturesMillimeter2016}. The insights from \cite{HaEtAlEnergyEfficiencyAnalysisCircuit2013,BjornsonEtAlOptimalDesignEnergyEfficientMultiUser2015} stress the importance of large arrays for energy-efficient MIMO communications, and the hardware models in \cite{RibeiroEtAlEnergyEfficiencyMmWaveMassive2018,RialEtAlHybridMIMOArchitecturesMillimeter2016,TsaiNatarajan60GHzPassiveActiveRFpath2009} provide a valuable baseline for comparing the power consumption of different MIMO precoding architectures. We will build upon these studies to analyze a DMA transmitter architecture compared to an analog precoding architecture.

\subsection{Contributions}

In this paper, we integrate a codebook design with DMAs to analyze the resulting spectral and energy efficiency in a realistic wireless simulation environment. First, we propose a novel method for enhancing the beamforming gain of DMAs through an optimal phase rotation of the desired weights prior to mapping. We then apply this proposed mapping technique to a hierarchical codebook. Despite the widespread use of codebooks in current wireless standards \cite{ZhouOhashiEfficientCodebookbasedMIMOBeamforming2012,UwaechiaMahyuddinComprehensiveSurveyMillimeterWave2020}, there are no studies that investigate codebook design for DMAs. Moreover, we use the hierarchical codebook to present new methods for generating wide beamwidth beam patterns with the DMA. While much of the DMA signal processing literature uses DMA models that ignore aspects like mutual coupling and element perturbation of the waveguide, we simulate a DMA to obtain realistic beam patterns in the electromagnetic simulation software HFSS. This allows us to account for the constraints and effects imposed by the physical DMA design that are not included in typical DMA models. Furthermore, we use the HFSS measurements and channel generation software QuaDRiGa to model a DMA-based MIMO system for spectral efficiency analysis. We apply QuaDRiGa with DMAs to account for channel effects like scattering, polarization, and path loss. Many current studies have implemented a simple multi-path channel model that does not account for these effects \cite{HuangEtAlJointMicrostripSelectionBeamforming2022,ShlezingerEtAlDynamicMetasurfaceAntennasUplink2019,YouEtAlEnergyEfficiencyMaximizationMassive2022,WangEtAlDynamicMetasurfaceAntennasMIMOOFDM2021}. Lastly, we integrate a detailed transmitter architecture to determine the power consumption savings of DMAs compared to a typical phased array. Prior work has analyzed energy efficiency with a DMA model that assumes perfect channel state information and unconstrained precoding \cite{YouEtAlEnergyEfficiencyMaximizationMassive2022}. We incorporate simulated DMA results from a realistic DMA design applying the hierarchical codebook design and limited feedback precoding. We use these calculations to establish the tradeoff in DMA performance with power consumption through the resulting spectral and energy efficiency.

This paper is organized as follows: In Section \ref{sec: system model}, we discuss the operation of the DMA and how to incorporate its radiation characteristics into the signal model. We then establish a MISO-OFDM signal model for use with the DMA, integrating the DMA characteristics into the beamformer and channel model. Next, in Section \ref{sec: DMA mapping}, we describe the DMA mapping techniques implemented in \cite{BowenEtAlOptimizingPolarizabilityDistributionsMetasurface2022} and propose a novel method to enhance the DMA beamforming gain via a phase rotation of the desired weights prior to mapping. In Section \ref{sec: DMA design and hardware}, we discuss the physical DMA design and transmitter architecture used to create DMA beam patterns and analyze power consumption. This section also details the hierarchical codebook design. In Section \ref{sec: results}, we present results for the simulated DMA in HFSS with the hierarchical codebook design and power consumption model. We analyze the ability of the DMA to produce wide beamwidths, the coverage of the DMA hierarchical codebook through multiple mapping techniques, and the spectral and energy efficiency results for the DMA compared to a realistic phased array. Lastly, we summarize the key findings and results for this work in Section \ref{sec: conclusion}.

{\it{Notation}}: We denote a bold, capital letter $ \mathbf{A} $ as a matrix, a bold, lowercase letter $ \mathbf{a} $ as a vector, and a script letter $ \mathcal{A} $ as a set. Let $\mathbf{A}^*$ represent the matrix conjugate transpose, $\mathbf{A}^c$ represent the matrix conjugate, and $\mathbf{A}^T$ represent the matrix transpose. We denote $\sf{vec}(\mathbf{A})$ as the vectorization of the matrix $\mathbf{A}$. For a complex number $a$, we use $\operatorname{Re}\{a\}$ to indicate its real part. We define the operator $\odot$ as the Hadamard product.

\section{System model}\label{sec: system model}

We consider a single-user MISO-OFDM system with DMAs at the transmitter. We take an individual DMA to consist of a planar array with $\Nt$ total radiating elements. The DMA comprises $D$ waveguides and $L$ radiating elements per waveguide such that $\Nt=DL$, as shown in Fig. \ref{fig: dma array}. The DMA is assumed to have reconfigurable elements that allow for changes to their radiated amplitude and phase. For a total of $ K $ subcarriers, we define the $k$th subcarrier wireless channel $ \mathbf{h}[k]  $ between the transmit and receive antennas to be an $ \Nt \times 1 $ matrix \cite{ParkEtAlDynamicSubarraysHybridPrecoding2017}.

\begin{figure}[h!]
    \centering
    \includegraphics[width=.6\linewidth]{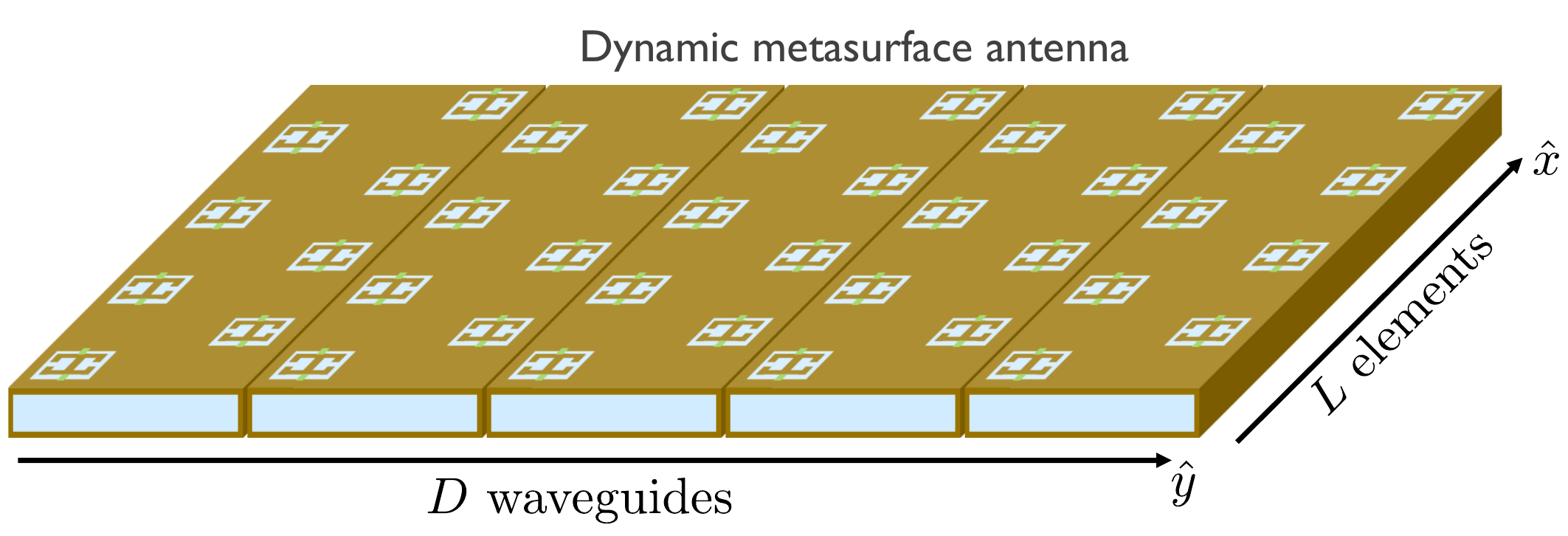}
    \caption{Visualization of an entire DMA with $D$ waveguides and $L$ radiating slot elements per waveguide.}
    \label{fig: dma array}
\end{figure}

\subsection{DMA element model}

We now discuss the general operation of the DMA to integrate its characteristics into the signal model. DMAs operate as a leaky-wave slot antenna, where each antenna element consists of a waveguide slot that radiates power \cite{SmithEtAlAnalysisWaveguideFedMetasurfaceAntenna2017}. The integration of reconfigurable components is the key enabling technology for energy-efficient beamforming with DMAs. The reconfigurable components operate by changing the resonant response of the individual DMA elements, which shifts their radiated amplitude and phase \cite{ShlezingerEtAlDynamicMetasurfaceAntennasUplink2019}. Changing the resonance of the DMA element requires altering its capacitance and/or inductance. This can be achieved in a variety of ways. Examples include integrating a varactor diode to tune the DMA element capacitance, or using a liquid crystal to tune the effective permittivity of the substrate \cite{Perez-AdanEtAlIntelligentReflectiveSurfacesWireless2021}. In our work, we focus on the varactor diode implementation. Using varactor diodes is consistent with the physical DMA models in \cite{BoyarskyEtAlElectronicallySteeredMetasurfaceAntenna2021,SmithEtAlAnalysisWaveguideFedMetasurfaceAntenna2017}. The varactor diode capacitance range limits the resonant frequency tuning of the DMA. Adjusting the DMA element resonant frequency through varactor diodes leads to a controllable radiated phase and amplitude, enabling beamforming for the DMA.

We now examine how to integrate the DMA element models in \cite{SmithEtAlAnalysisWaveguideFedMetasurfaceAntenna2017,BoyarskyEtAlElectronicallySteeredMetasurfaceAntenna2021} into a DMA precoding framework. DMA elements are typically modeled as magnetic dipole radiators whose magnetic polarizability $\am$ dictates the radiated amplitude and phase \cite{BoyarskyEtAlElectronicallySteeredMetasurfaceAntenna2021,SmithEtAlAnalysisWaveguideFedMetasurfaceAntenna2017}. As DMA element geometry mimics a complimentary electric-LC resonator, we can relate the DMA element frequency response to a simple LC circuit. The inductance and capacitance for the circuit are determined by the element geometry and the varactor diode value. For a DMA element, we let $F$ be the coupling factor and $\gamma$ be the damping factor. We also define $\omega$ as the angular frequency and $\omega_0$ as the resonant angular frequency. The LC circuit model describes the magnetic polarizability as \cite{BoyarskyEtAlElectronicallySteeredMetasurfaceAntenna2021}

\begin{equation}\label{eq: polarizability}
    \am(\omega) = \frac{F\omega^2}{\omega_0^2-\omega^2+\sfj \omega\gamma}.
\end{equation}

\noindent The varactor diode can alter \eqref{eq: polarizability} by changing the resonant frequency $\omega_0$, altering the amplitude and phase of the polarizability at a specific frequency.

Next, we discuss how to translate the magnetic dipole model to beamforming weights. For a DMA, the radiated fields are related by the dipole moment $\eta$ of each DMA element \cite{BoyarskyEtAlElectronicallySteeredMetasurfaceAntenna2021}. For the magnetic polarizability $\am(\omega)$ and waveguide magnetic field $m$, the dipole moment is given as $\eta=\am(\omega) m$ \cite{BoyarskyEtAlElectronicallySteeredMetasurfaceAntenna2021}. Since the waveguide magnetic field is fixed, tuning $\am(\omega)$ via the reconfigurable component allows for control of the radiated field through the DMA element dipole moment. Therefore, the tunable magnitude and phase of the magnetic polarizability can be thought of as the beamforming weights for a DMA precoding architecture. The integration of a varactor diode works to tune the resonant frequency $\omega_0$ of the magnetic polarizability in \eqref{eq: polarizability}. At a center angular frequency $\omega_{\sf{c}}$, these tuning states can create different radiated amplitude and phases with $\am( \omega_{\sf{c}})$, where $\omega_0$ is controlled by the varactor diode. We can normalize $\am(\omega_{\sf{c}})$ by the factor $\frac{F\omega_0}{\gamma}$ to model the available DMA beamforming weights as Lorentzian-constrained weights given by \cite{HuangEtAlJointMicrostripSelectionBeamforming2022,BowenEtAlOptimizingPolarizabilityDistributionsMetasurface2022}

\begin{equation}\label{eq: DMA weight dist}
    \mathcal{Q} = \left\{-\frac{\sfj+e^{\sfj\varphi}}{2} : \varphi\in[0,2\pi]\right\}.
\end{equation}

\noindent The normalization by $\frac{F\omega_0}{\gamma}$ is such that the highest magnitude weight value coincides with desired weights at $-1\sfj$. The normalization makes it easier to compare the DMA weights to conventional unit-norm beamforming vectors and enables the mapping methods developed in \cite{BowenEtAlOptimizingPolarizabilityDistributionsMetasurface2022}, as discussed in Section \ref{sec: DMA mapping}. Given the limited DMA weights, we define the feasible DMA beamforming set $\mathcal{F}_{\sf{DMA}}=\left\{ \mathbf{f}=\left[ f_1,\ldots,f_{\Nt} \right]^T : f_i \in \mathcal{Q}, i=1,\ldots,\Nt \right\}$ as the set of all vectors of length $\Nt$ whose elements satisfy the constraint for $\mathcal{Q}$.

The DMA waveguide affects signal propagation, so we incorporate its effect in the channel model. The electric and magnetic fields within the waveguide have a sinusoidal distribution that also effects the radiation characteristics of each DMA element. As the electromagnetic fields propagate down the waveguide, the reference wave becomes attenuated due to the radiation of energy through the DMA slots and a waveguide attenuation constant. For our signal model, however, we assume there to be no attenuation to simplify the precoding challenges \cite{HuangEtAlJointMicrostripSelectionBeamforming2022,ShlezingerEtAlDynamicMetasurfaceAntennasUplink2019,YouEtAlEnergyEfficiencyMaximizationMassive2022,BowenEtAlOptimizingPolarizabilityDistributionsMetasurface2022}. For a waveguide propagation constant $\beta$, element position $x_{d,\ell}$ representing the $d$th waveguide and $\ell$th element, and assuming a unit-amplitude reference wave, the waveguide magnetic field at each DMA element with no attenuation is modeled as \cite{HuangEtAlJointMicrostripSelectionBeamforming2022}

\begin{equation}\label{eq: waveguide}
    \mbl = e^{-\sfj \beta x_{d,\ell}}.
\end{equation}

\noindent We define $\mathbf{m} \in \mathbb{C}^{DL\times 1}$ as the vectorized waveguide magnetic field values. We assume that the bandwidth is narrow enough such that the propagation constant $\beta$ does not change significantly with frequency, allowing us to approximate $\mathbf{m}$ as frequency-flat. The propagation constant of the DMA waveguide introduces an inherent phase advance into the DMA system. Therefore, the DMA weights must be tuned both to steer the beam in a desired direction, and to counteract this phase advance.

The Lorentzian-constrained weights and waveguide model highlight the main tradeoff for enabling wireless communications with DMAs. DMAs offer an energy-efficient solution to beamforming via low-power reconfigurable components. The limited weights and grating lobes from the waveguide phase advance, however, often lead to decreased beamforming gain and beamsteering accuracy compared to phased arrays \cite{BoyarskyEtAlGratingLobeSuppressionMetasurface2020,BoyarskyEtAlElectronicallySteeredMetasurfaceAntenna2021}. In the next section, we will integrate these parameters into a MISO signal model.

\subsection{Communication and signal model}

Next, we describe the MISO precoding architecture for implementation with a DMA. We first define the single-stream MISO-OFDM received signal $y[k]$ for Gaussian-distributed noise $n[k] \sim \mathcal{N}_{\mathbb{C}}(0,1)$, wireless channel $\mathbf{h}[k]$ from transmit antenna $\nt$ to the single receiver, beamformer $\mathbf{f}$, transmitted symbol $s[k]$ and signal-to-noise ratio (SNR) $\rho$ as \cite{ParkEtAlDynamicSubarraysHybridPrecoding2017}

\begin{equation}
    y[k] = \sqrt{\rho} \mathbf{h}^*[k] \mathbf{f}s[k] + n[k]. 
\end{equation}

\noindent In the DMA architecture, the beamformer $\mathbf{f}$ will consist of the DMA beamforming weights, whose frequency response can be taken as a bandpass filter given the LC circuit model. It has been shown that the quality factor for the DMA element frequency response typically results in an approximately equal gain across frequency for a narrowband waveguide excitation \cite{ShlezingerEtAlDynamicMetasurfaceAntennasUplink2019}. Therefore, we assume that the bandwidth is small enough such that we can approximate the beamformer $\mathbf{f}$ as frequency-flat. As our goal is to investigate the use of DMAs for codebook design, we assume $\mathbf{f}$  to belong to a set of codewords in a codebook $\mathcal{F} \subseteq \mathcal{F}_{\sf{DMA}}$ for beamforming. For the codebook  $\mathcal{F}$, the spectral efficiency optimized by the codeword selection assuming a known channel is then given as \cite{JrLozanoFoundationsMIMOCommunication2018}

\begin{equation}\label{eq: spec eff}
    C\left(\rho|\{\mathbf{h}[k]\}_{k=1}^K \right) = \argmax\limits_{\mathbf{f} \in \mathcal{F}}\frac{1}{K}\sum\limits_{k=1}^{K}\log_2{\left(1+\rho|\mathbf{h}^*[k]\mathbf{f}|^2 \right)}.
\end{equation}

\noindent We choose to investigate codebook designs for DMAs as codebooks are practical, low-complexity, and widely used in current standards 
\cite{ZhouOhashiEfficientCodebookbasedMIMOBeamforming2012,UwaechiaMahyuddinComprehensiveSurveyMillimeterWave2020}. This topic also has not been studied in literature with DMAs like \cite{HuangEtAlJointMicrostripSelectionBeamforming2022,YouEtAlEnergyEfficiencyMaximizationMassive2022,ShlezingerEtAlDynamicMetasurfaceAntennasUplink2019}. Therefore, we implement a hierarchical codebook design with DMAs in this paper to help close the gap in literature and compare against current beamforming methods with phased arrays.

We compare a typical analog precoding architecture to a DMA precoding architecture in Fig. \ref{fig: precoding architectures} for a single-stream scenario. Using analog beamforming with phase shifters, an $\Nt \times 1$ beamformer $\mathbf{\fps}$ contains weight information to control each antenna. For a DMA beamformer $\fdma$, however, we use the integrated varactor diode to alter the radiated amplitude and phase instead of phase shifters. The DMA beamforming weight information is translated to the varactor diode voltage control for beamforming via a weight-to-voltage mapping. The elements of the DMA beamformer $\fdma$ are restricted by the Lorentzian-constrained weights $\mathcal{Q}$, while the elements of the phased array beamformer $\fps$ lie on the complex unit circle.

We also must integrate the effects of the DMA waveguide into the wireless channel for the DMA. The DMA waveguide introduces an additional phase term due to the phase advance of the waveguide field propagation for the $d$th waveguide and $\ell$th element as $\mbl = e^{-\sfj \beta x_{d,\ell}}$. The input signal to the DMA then propagates through the DMA waveguide and is radiated out through the DMA elements to become a part of the wireless channel for this DMA element $\hbl[k]$, yielding $\hbl[k] \mbl$. To account for this effect for all DMA elements, we multiply the wireless channel $\mathbf{h}[k]$ by its corresponding phase advance $\mathbf{m}$ to result in the effective channel

\begin{equation}
    \mathbf{\heff}[k] = \mathbf{h}[k] \odot \mathbf{m}.
\end{equation}

\noindent Therefore, the DMA element weights need to be set both to steer the beam and counteract the waveguide phase advance to create a high-gain beam in a desired direction. In the next section, we examine methods for setting the DMA weights developed in prior work \cite{SmithEtAlAnalysisWaveguideFedMetasurfaceAntenna2017,BowenEtAlOptimizingPolarizabilityDistributionsMetasurface2022}.

\begin{figure} 
    \centering
  \subfloat[\label{fig: ant array precoding}]{%
       \includegraphics[width=.48\linewidth]{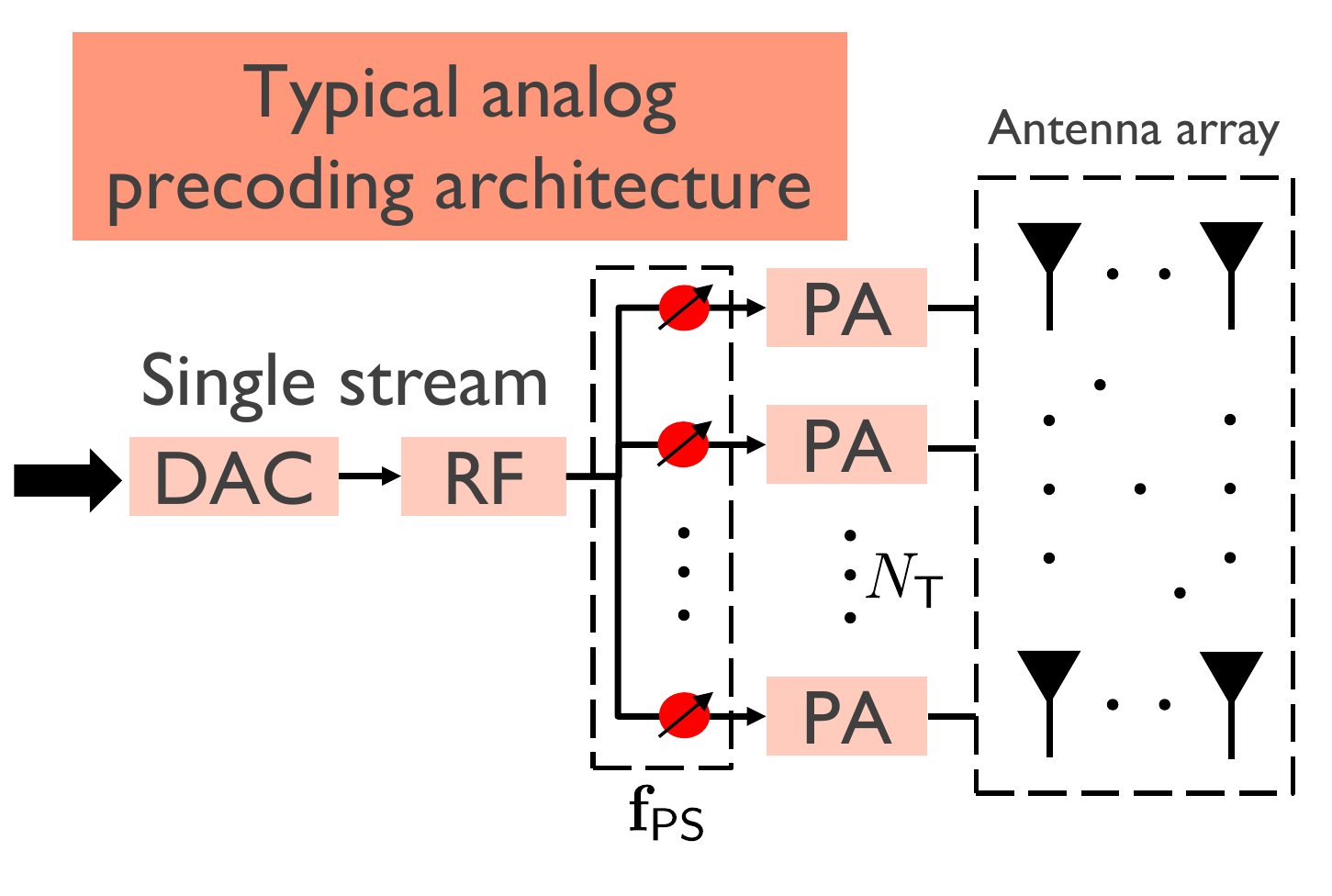}}
    \hfill
  \subfloat[\label{fig: dma precoding}]{%
        \includegraphics[width=.48\linewidth]{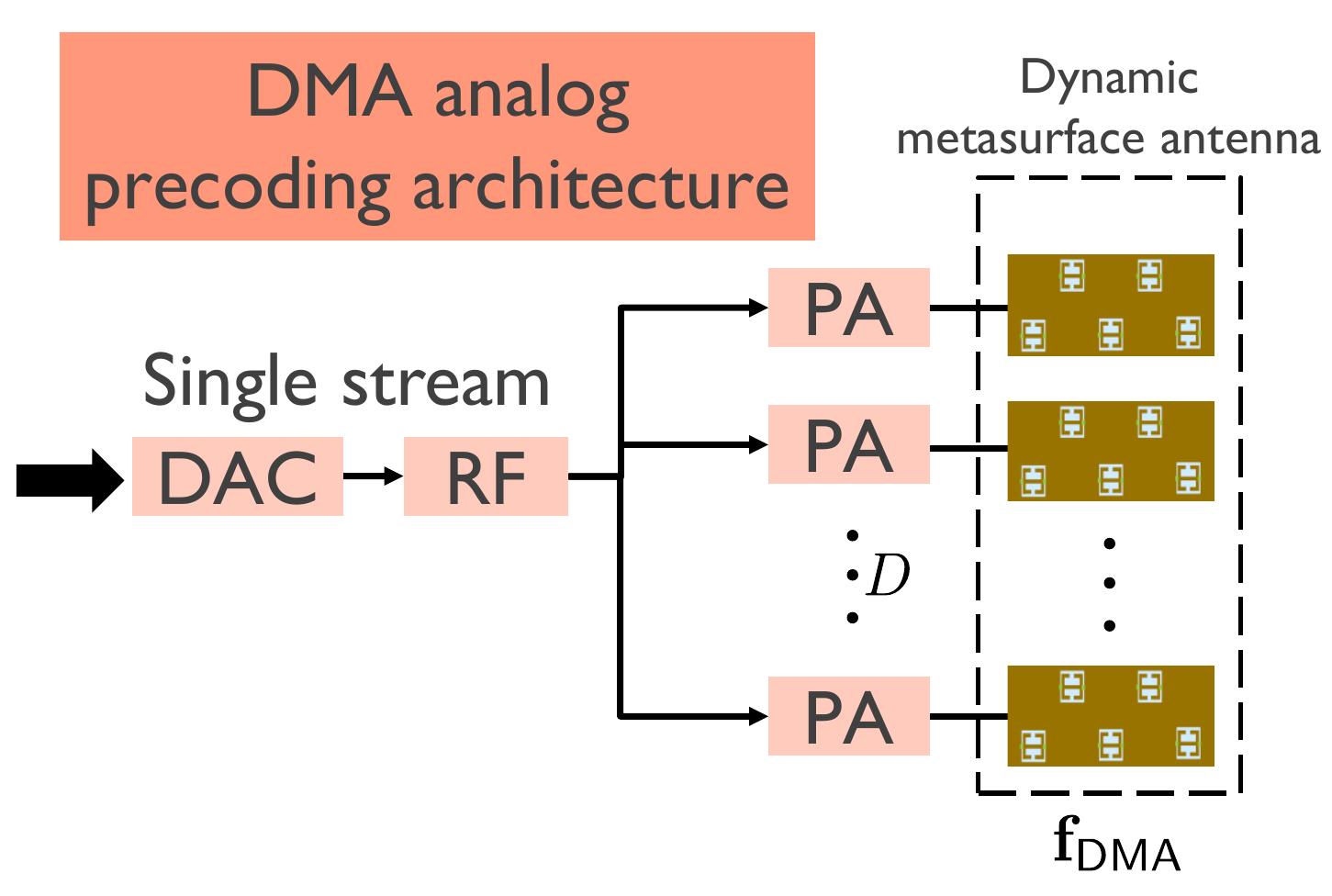}}

  \caption{Analog beamforming transmitter architecture for (a) a typical phased antenna array and (b) the DMA. Typical analog implementations enable precoding with phase shifters, while DMAs use reconfigurable components.}
  \label{fig: precoding architectures} 
\end{figure}

\section{DMA mapping design}\label{sec: DMA mapping}

We now discuss mapping techniques that enable DMA beamforming. We formulate the DMA mapping problem by defining each DMA element to have a specified weight $\qbl$, where the weight values are limited by the Lorentzian constraint $\qbl \in \mathcal{Q}$. The desired weights $\wbl$ are assumed to lie on the complex unit circle so that $|\wbl|=1$. These weights mimic a typical phased array implemented with unit-amplitude phase shifters. Mapping then takes the desired weight $\wbl$ and finds a corresponding Lorentzian-constrained weight $\qbl$ that is close to the desired weight with respect to a metric. The individual DMA weight values form the overall DMA weight matrix $\mathbf{Q}$, and the desired weights form $\mathbf{W}$. We can then define the DMA beamformer as $\fdma={\sf{vec}}(\mathbf{Q})$. In a similar fashion, we define $\mathbf{f}_{\sf{PS}} ={\sf{vec}}(\mathbf{W}) $ to represent the desired beamformer with phase shifters for beamforming. Thus, we specify the DMA analog beamformer in the same way as a typical array but with the additional Lorentzian-constrained weights. 

The DMA beamformer $\fdma$ is found by mapping the desired beamformer $\fps$ onto the DMA weights given by $\mathcal{Q}$, also accounting for the waveguide channel effects. We define the effective desired weight $\wtbl$ that counteracts the waveguide phase advance as

\begin{equation}
    \wtbl = \wbl  \mbl^*.
\end{equation}

\noindent We then use this new weight for mapping to find $\fdma$. While there are multiple mapping methods presented in literature, we focus here on the two mappings that were found to provide the highest beamforming gain and steering accuracy: Euclidean modulation and Lorentzian-constrained modulation \cite{BowenEtAlOptimizingPolarizabilityDistributionsMetasurface2022}. We summarize these mapping techniques as follows, based on the work in \cite{BowenEtAlOptimizingPolarizabilityDistributionsMetasurface2022, SmithEtAlAnalysisWaveguideFedMetasurfaceAntenna2017}. We also develop a novel method for creating larger beamforming gains through a phase rotation of the desired weights.

\subsection{DMA mapping}

Euclidean modulation is a mapping that involves minimizing the Euclidean distance between the effective desired weights and DMA weights. For the weight $\wtbl$, we can formulate the optimal weight $q^{\sf{EM}}_{d,\ell}$ for Euclidean modulation (EM) by setting \cite{BowenEtAlOptimizingPolarizabilityDistributionsMetasurface2022}

\begin{equation}\label{eq: EM}
    q^{\sf{EM}}_{d,\ell} = \argmin\limits_{\qbl \in \mQ}|\qbl-\wtbl|.
\end{equation}

\noindent This can be determined through a brute-force search algorithm, or formulaically as described in the following lemma: 

\begin{lemma}
For the desired weight $\wtbl$, the weight $q_{d,\ell} \in \mathcal{Q}$ that minimizes $|\qbl-\wtbl|$ is given by $q^{\sf{EM}}_{d,\ell} = -\frac{\sfj +e^{\sfj ( \angle(2\wtbl+\sfj)-\pi )}}{2}$.

\end{lemma}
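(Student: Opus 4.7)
The plan is to substitute the Lorentzian parametrization of $\mathcal{Q}$ from \eqref{eq: DMA weight dist} directly into the objective and reduce the problem to a one-dimensional optimization over the phase $\varphi \in [0,2\pi]$. Writing $\qbl = -\frac{\sfj + e^{\sfj\varphi}}{2}$, the objective becomes
\begin{equation}
|\qbl - \wtbl| \;=\; \Bigl|\tfrac{\sfj + e^{\sfj\varphi}}{2} + \wtbl\Bigr| \;=\; \tfrac{1}{2}\bigl|e^{\sfj\varphi} + (\sfj + 2\wtbl)\bigr|.
\end{equation}
So the task reduces to choosing the unit-modulus quantity $e^{\sfj\varphi}$ as close as possible to $-(\sfj + 2\wtbl)$.

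Next, I would invoke the standard geometric fact that for a fixed complex number $z$, the quantity $|e^{\sfj\varphi} - z|$ (or equivalently $|e^{\sfj\varphi} + z|$) is minimized when $e^{\sfj\varphi}$ is the unit vector pointing in the same direction as the target. Concretely, the minimizer of $|e^{\sfj\varphi} + (\sfj+2\wtbl)|$ is attained when $e^{\sfj\varphi} = -\frac{\sfj+2\wtbl}{|\sfj+2\wtbl|}$, i.e.\ when
\begin{equation}
\varphi^\star \;=\; \angle\bigl(-(\sfj + 2\wtbl)\bigr) \;=\; \angle(2\wtbl + \sfj) - \pi \pmod{2\pi},
\end{equation}
which lies in $[0,2\pi]$ after an appropriate representative is chosen.

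Substituting $\varphi^\star$ back into the Lorentzian parametrization then gives
\begin{equation}
q^{\sf{EM}}_{d,\ell} \;=\; -\frac{\sfj + e^{\sfj(\angle(2\wtbl+\sfj) - \pi)}}{2},
\end{equation}
which is the claimed closed form. I would also briefly note the degenerate case $2\wtbl + \sfj = 0$, in which the objective is independent of $\varphi$ and any $\qbl \in \mQ$ is optimal, so the formula can be extended by convention. The only mild obstacle is making precise the angle convention (principal value vs.\ representative in $[0,2\pi]$) and verifying that the geometric minimization step is rigorous; both are straightforward once one observes that $|e^{\sfj\varphi}+z|^2 = 1 + |z|^2 + 2\operatorname{Re}\{e^{\sfj\varphi} z^c\}$ is minimized by aligning $e^{\sfj\varphi}$ antiparallel to $z$.
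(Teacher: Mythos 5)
Your proof is correct and follows essentially the same route as the paper: substitute the Lorentzian parametrization, reduce to minimizing $|e^{\sfj\varphi}+(2\wtbl+\sfj)|$ over the phase, and pick $e^{\sfj\varphi}$ antiparallel to $2\wtbl+\sfj$, giving $\varphi^\star=\angle(2\wtbl+\sfj)-\pi$. Your explicit handling of the degenerate case $2\wtbl+\sfj=0$ and the careful sign bookkeeping are welcome but do not constitute a different argument.
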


\begin{proof}\renewcommand{\qedsymbol}{}
    See Appendix.
\end{proof}

\noindent Under ideal conditions that assume no waveguide attenuation and a theoretical DMA model, prior work has shown that Euclidean modulation suffers from increased grating lobes and lower directivity due to the non-uniform phase differences between elements \cite{BowenEtAlOptimizingPolarizabilityDistributionsMetasurface2022}.

Lorentzian-constrained modulation allows for the DMA weights to be determined in a formulaic manner similar to the array factor. The desired weight phase $\angle \wtbl$ is substituted into the Lorentzian-constrained weight equation to calculate the Lorentzian-constrained (LC) modulation weights as \cite{BowenEtAlOptimizingPolarizabilityDistributionsMetasurface2022}

\begin{equation}\label{eq: LC}
    \qbl^{\sf{LC}}=-\frac{\sfj-e^{\sfj \angle \wtbl}}{2}.
\end{equation}

\noindent The Lorentzian-constrained modulation provides a nearly constant phase difference between DMA elements as it minimizes the distance between the weight phase values \cite{BowenEtAlOptimizingPolarizabilityDistributionsMetasurface2022}, whereas Euclidean modulation minimizes the distance between desired and Lorentzian-constrained weights. Further details regarding the Euclidean and Lorentzian-constrained modulation techniques can be found in \cite{BowenEtAlOptimizingPolarizabilityDistributionsMetasurface2022}.

While Lorentzian-constrained modulation provides greater beamforming gain and accuracy than Euclidean modulation \cite{BowenEtAlOptimizingPolarizabilityDistributionsMetasurface2022}, there is still limited work comparing these mapping techniques with realistic DMA designs. Moreover, these mappings techniques have not been optimized when considering non-ideal DMA characteristics, such as element perturbation of the waveguide. We will develop an optimization method to maximize the theoretical beamforming gain through a phase rotation of the desired weights in the following section.

\subsection{Mapping optimization}

We now introduce a novel technique of applying a phase rotation to the desired precoder $\fps$ to maximize beamforming gain with current mapping techniques. Normally, adding an arbitrary phase rotation $\zeta$ to the antenna weights does not change the final beamforming gain or radiation pattern so that $\fps \equiv \fps e^{\sfj \zeta} \; \forall \; \zeta \in [0,2\pi)$, where $\equiv$ indicates that the two quantities are equivalent in terms of beamforming gain. Applying a phase rotation to the desired weights, though, will result in a mapping onto different Lorentzian-constrained weights. We demonstrate the phase rotation mapping by decomposing the Lorentzian-constrained modulation with a phase rotation into its real and imaginary parts as $\qbl^{\sf{LC}}(\zeta)=\frac{1}{2} \left[ \cos \left(\angle(\wtbl e^{\sfj \zeta}) \right) +{\sfj} \left( \sin \left(\angle(\wtbl e^{\sfj \zeta}) \right) -1 \right) \right] $. The magnitude and phase for the Lorentzian-constrained weight are then given by

\begin{equation}\label{eq: zeta amp}
    \begin{split}
        |\qbl^{\sf{LC}}(\zeta)| & = \sqrt{\frac{1}{4} \left[ \cos \left(\angle(\wtbl e^{\sfj \zeta}) \right)^2 + \left( \sin \left(\angle(\wtbl e^{\sfj \zeta}) \right) -1 \right)^2 \right] } \\ & = \sqrt{\frac{1-\sin \left(\angle(\wtbl e^{\sfj \zeta}) \right)}{2}},
    \end{split}
\end{equation}

\begin{equation}\label{eq: zeta phase}
    \angle \qbl^{\sf{LC}}(\zeta) = \arctan\left( \frac{ \sin \left(\angle(\wtbl e^{\sfj \zeta}) -1 \right)}{\cos \left(\angle(\wtbl e^{\sfj \zeta}) \right)} \right).
\end{equation}

\noindent We find in \eqref{eq: zeta amp} and \eqref{eq: zeta phase} that each phase rotation $\zeta$ leads to a unique complex value for the resulting DMA weight magnitude and phase. Therefore, the set of mapped weights from a phase rotation will correspond to a new radiation pattern with differences in the beamwidth, beamforming gain and steering accuracy. We further illustrate this phase rotation in Fig. \ref{fig: dma phase rotation}. 

To achieve larger beamforming gain, we develop an approach for optimizing the phase rotation of the desired weights directly through electromagnetic simulations. While studies have shown that Euclidean and Lorentzian-constrained modulation provide accurate beamsteering through simulation and experimental results \cite{SmithEtAlAnalysisWaveguideFedMetasurfaceAntenna2017,BoyarskyEtAlElectronicallySteeredMetasurfaceAntenna2021}, no studies have been conducted to analyze their accuracy in calculating beamforming gain. Moreover, the ideal waveguide channel model implemented in \eqref{eq: waveguide} assumes negligible waveguide field attenuation due to the waveguide attenuation constant and power leaked out through the DMA elements. As the radiated fields from the DMA element are proportional to the waveguide fields, the waveguide field attenuation will certainly impact the resulting DMA beam pattern gain. We leave the development of a sophisticated beamforming gain model for future work, and instead use a brute-force search with electromagnetic simulations of a DMA. We restrict the phase rotation $\zeta$ to be a part of the set $\mathcal{Z} = \left\{ \frac{2\pi (0)}{Z},\frac{2\pi (1)}{Z},\ldots, \frac{2\pi (Z-2)}{Z}, \frac{2\pi (Z-1)}{Z} \right\}$, which describes the $Z$ discrete phase rotations possible. For all $\zeta \in \mathcal{Z}$, we calculate the LC and EM mapped weights as $\qbl^{\sf{LC}}(\zeta)=-\frac{\sfj-e^{\sfj \angle(\wtbl e^{\sfj \zeta})}}{2}$, $q^{\sf{EM}}_{d,\ell}(\zeta) = -\frac{\sfj+e^{\sfj ( \angle(2\wtbl e^{{\sfj}\zeta}+\sfj)-\pi ) }}{2}$. We then perform electromagnetic simulations to extract the resulting DMA beam pattern gain as a function of its azimuth angle $\phi$, denoted as $v(\zeta,\phi)$. At the desired beamsteering angle $\phi_0$, we find the phase rotation that maximizes beamforming gain as $\hat{\zeta}(\phi_0) = \argmax\limits_{\zeta \in \mathcal{Z}} v(\zeta,\phi_0)$. The optimized DMA weights are then given by

\begin{equation}
    \hat{q}_{d,\ell}^{\sf{LC}}(\phi_0)=-\frac{\sfj-e^{\sfj \angle(\wtbl e^{\sfj\hat{\zeta}(\phi_0)})}}{2},
\end{equation}

\begin{equation}
    \hat{q}^{\sf{EM}}_{d,\ell}(\phi_0) = -\frac{\sfj+e^{\sfj ( \angle(2\wtbl e^{{\sfj}\hat{\zeta}(\phi_0)}+\sfj)-\pi ) }}{2}.
\end{equation}

\noindent We implement this optimization to compare against the typical mapping techniques and enhance system performance.

The brute-force phase rotation optimization provides an additional methodology for enhancing the beamforming gain of DMAs. The ultimate objective for developing the phase rotation optimization and mapping techniques is to analyze their effectiveness in a codebook design with a realistic DMA. Thus, we now discuss the design of the DMA and its simulated radiation characteristics.

\begin{figure}
    \centering
    \includegraphics[width=.42\linewidth]{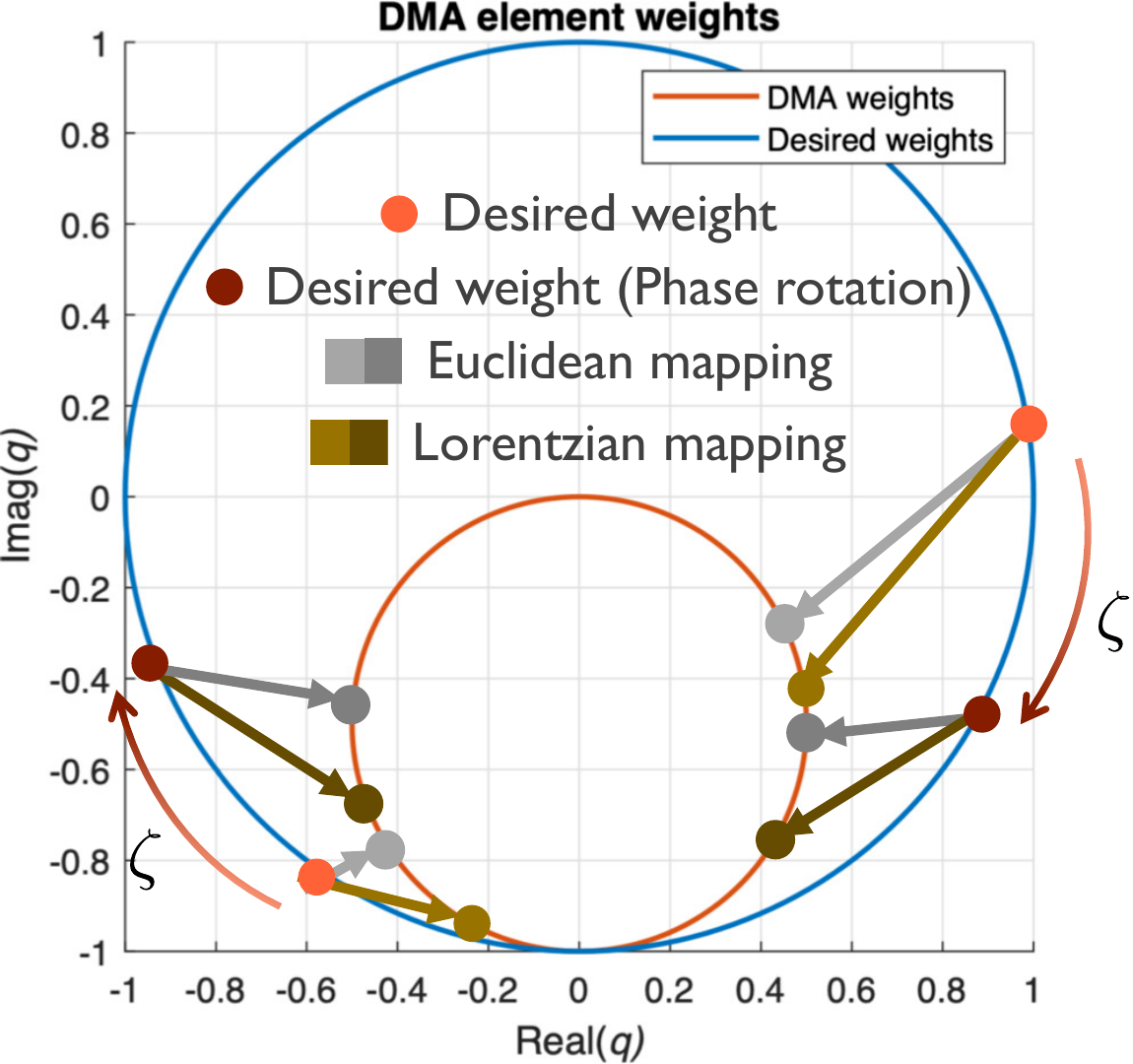}
    \caption{Applying a phase rotation to the desired weights results in a new set of mapped weights onto the Lorentzian-constrained weights. We can leverage this to achieve greater beamforming gain with the DMA.}
    \label{fig: dma phase rotation}
\end{figure}

\section{DMA design and hardware implementation}\label{sec: DMA design and hardware}

Although DMAs are motivated as an energy-efficient solution to large, MIMO arrays, there is limited work modeling their total power consumption and incorporating realistic DMA designs into the precoding architecture. In this section, we develop a DMA design based on previous work that we will use to simulate codebook beam patterns. Furthermore, we establish a transmitter power consumption model for both the DMA and a typical phased antenna array for energy efficiency comparisons.

\subsection{DMA design}

We now discuss the design of the DMA and details of the simulation process. Since the purpose of this paper is to analyze codebook designs for DMAs, we choose to modify a previously-established DMA rather than develop our own. This allows for an analysis of the proposed DMA system model with DMAs that have been tested experimentally. We leave the joint optimization of a new DMA design and the proposed system model for future work. 

We use and modify the DMA design from \cite{BoyarskyEtAlElectronicallySteeredMetasurfaceAntenna2021}. The element geometry for this design is depicted in Fig. \ref{fig: dma array}, consisting of a complimentary electric-LC resonator \cite{BoyarskyEtAlElectronicallySteeredMetasurfaceAntenna2021}. We use a hollow waveguide design to increase the DMA radiation efficiency. Each antenna element is loaded with two controllable varactor diodes. We choose to analyze a DMA with $L=8$ elements per waveguide, and $D=5$ waveguides as it fits well with the binary hierarchical codebook we will discuss shortly. We can then design the DMA element to meet a radiated power performance requirement given the $D \times L$ array. With $L=8$ DMA elements per waveguide, we need to ensure that the DMA is capable of radiating out all the input power from the excited waveguide. Power not radiated out through the DMA elements will be absorbed at the end of the waveguide, resulting in low radiation efficiency. Additionally, radiating too much power through the first initial elements will cause the end elements to radiate nearly no power. It is important to design the DMA such that most or all of the elements are radiating out power to ensure an efficient antenna. We tune the amount of radiated power associated with each Lorentzian-constrained weight in the DMA element design via its geometry. We also offset the DMA elements from the center of the waveguide as another method to control the radiated power \cite{BoyarskyEtAlElectronicallySteeredMetasurfaceAntenna2021}. We alter the design in \cite{BoyarskyEtAlElectronicallySteeredMetasurfaceAntenna2021} to achieve a maximum element radiated power of around 40\% of the input power through trial and error, corresponding to the Lorentzian-constrained weight of $-1\sfj$. Table \ref{MSA table params} shows important specifications for the designed DMA, and we leave further details of the DMA design process and specific features of this DMA to be found in \cite{BoyarskyEtAlElectronicallySteeredMetasurfaceAntenna2021}.

We model and simulate the designed DMA in the electromagnetic simulation software HFSS. We represent the varactor diodes as lumped elements and change the capacitance to vary the resonant frequency of the DMA elements. After simulating the characteristics of a single DMA element, we calculate the magnetic polarizability associated with each varactor diode value through the resulting S-parameters, waveguide dimensions $a,b$, and waveguide propagation constant $\beta$ as \cite{BoyarskyEtAlElectronicallySteeredMetasurfaceAntenna2021}

\begin{equation}\label{eq: am sim}
    \am(\omega) = -\frac{\sfj ab}{\beta}\left(1+S_{11}(\omega)-S_{21}(\omega)\right).
\end{equation}

\noindent Next, we normalize the simulated $\am$ values such that the highest magnitude weight value corresponds with the weight $-1\sfj$. This is equivalent to normalizing by the factor $\frac{F\omega_0}{\gamma}$ discussed in Section \ref{sec: system model}. We show the resulting weights in Fig. \ref{fig: dma weight sim}. We chose discrete varactor diode values for the simulation, and interpolate the data to calculate the achievable weight-to-capacitance mapping for all varactor diode values. We find that the simulated DMA weights match very closely to the theoretical Lorentzian-constrained weights, which validates the beamforming model. Furthermore, we see that there is a small weight gap near the zero-weight for the simulated results. This is due to the realistic tuning constraints of a varactor diode, and will be further analyzed in Section \ref{sec: results} for its impact on the DMA beam patterns.

\begin{figure}
    \centering
    \includegraphics[width=.42\linewidth]{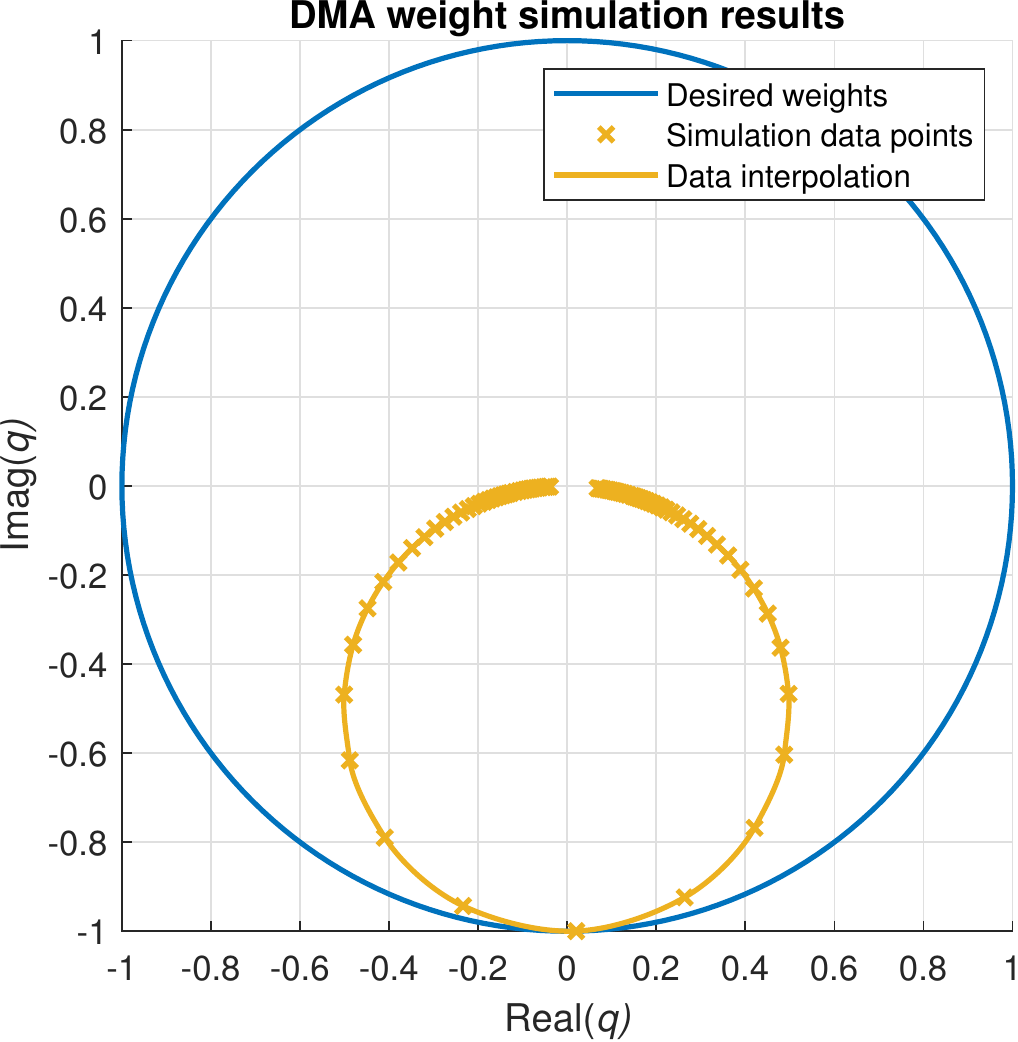}
    \caption{Simulated results for the Lorentzian-constrained weights of the designed DMA in HFSS using \eqref{eq: am sim}. We find that the DMA accurately achieves the Lorentzian-constrained weights defined in \eqref{eq: DMA weight dist}.}
    \label{fig: dma weight sim}
\end{figure}

\begin{table}[]
    \centering
    \caption{Parameters for the modeled DMA based on \cite{BoyarskyEtAlElectronicallySteeredMetasurfaceAntenna2021}.}
    \begin{tabular}{|c|c|c|}
        \hline
        Elements per waveguide & $L$ & $8$ \\
        \hline
        Number of waveguides & $D$ & $5$ \\
        \hline
        Element spacing in $\x$ direction & $d_{\sf{x}}$ & $5$ mm \\ \hline
        Element spacing in $\y$ direction & $d_{\sf{y}}$ & $10$ mm \\ \hline
        Operating wavelength & $\lambda$ & $20$ mm \\ \hline
        Operating frequency & $f_0$ & $15$ GHz \\ \hline
    \end{tabular}
    \label{MSA table params}
\end{table}

\subsection{Power consumption and component loss model}\label{sec: power cons}

We define power consumption models for the DMA and typical phased antenna array. We analyze scenarios for an analog MISO architecture as described in Section \ref{sec: system model}. We use these power consumption models to analyze the energy efficiency of the DMA compared to a phased array. Fig. \ref{fig: precoding architectures} shows the transmitter architecture for the phased array and DMA \cite{RibeiroEtAlEnergyEfficiencyMmWaveMassive2018}. Here, we assume a fully-connected architecture so that each antenna element is connected to its own phase shifter \cite{RialEtAlHybridMIMOArchitecturesMillimeter2016}.

Next, we define a power consumption model for the two beamforming cases. We denote the power consumption for the individual components in the transmitter architecture as follows: let $P_{\sf{LO}}$ for the local oscillator, $P_{\sf{PA}}$ for all power amplifiers, $P_{\sf{DAC}}$ for the digital-to-analog converter as a function of the bit resolution $b_{\sf{DAC}}$ and sampling frequency $F_{\sf{s}}$, $P_{\sf{PS}}$ for the phase shifter, and $P_{\sf{VAR}}$ for the varactor diode. We assume a single-stream scenario with only one RF chain. Putting these components together, we define the power consumption model for the typical phased array $P_{\sf{ANT}}$ and DMA  $P_{\sf{DMA}}$ as \cite{RibeiroEtAlEnergyEfficiencyMmWaveMassive2018}

\begin{equation}
    P_{\sf{ANT}} = P_{\sf{LO}} +P_{\sf{PA}}+2P_{\sf{DAC}}(b_{\sf{DAC}},F_{\sf{s}})+P_{\sf{RF}}  +\Nt  P_{\sf{PS}},
\end{equation}

\begin{equation}
    P_{\sf{DMA}} = P_{\sf{LO}} +P_{\sf{PA}}+2P_{\sf{DAC}}(b_{\sf{DAC}},F_s)+P_{\sf{RF}}+\Nt P_{\sf{VAR}}.
\end{equation}

\noindent Additional information regarding this power consumption model can be found in \cite{RibeiroEtAlEnergyEfficiencyMmWaveMassive2018}. We list the specific power consumption values used for each component in Table \ref{table: power consumption} \cite{RibeiroEtAlEnergyEfficiencyMmWaveMassive2018}. It is important to note that the power amplifier consumption is dependent on the actual transmit power $P_{\sf{T}}$ as $P_{\sf{PA}} = \frac{P_{\sf{T}}}{\eta_{\sf{PA}}}$, where $\eta_{\sf{PA}}$ is the power amplifier efficiency. We also note that the difference in these power consumption models is the phase shifter versus varactor diode power consumption.

We integrate the loss of the RF components through a reduction in the transmit power from the initial input power $P_{\sf{IN}}$. As described in \cite{RibeiroEtAlEnergyEfficiencyMmWaveMassive2018}, we implement the total loss through an $N_{\sf{A}}$-way power divider with static loss $\bar{L}_{\sf{D}}$ as $L_{\sf{D}}=\bar{L}_{\sf{D}}\lceil{\log_2{N_{A}}}\rceil$. We also denote the static loss due to phase shifters for each element by $L_{\sf{PS}}$. The system transmit power can then be defined as 

\begin{equation}\label{eq: trans power}
    P_{\sf{T}} = \frac{P_{\sf{IN}}}{L_{\sf{D}}L_{\sf{PS}}}.
\end{equation}

\noindent For the phased array, the power divider will have $N_{\sf{A}} = \frac{\Nt}{2}$ (see Section \ref{sec: results}),  while for the DMA the power divider will have $N_{\sf{A}} = D<\frac{\Nt}{2}$. Moreover, with no phase shifters attached to the DMA, there will be no loss from $L_{\sf{PS}}$ for the DMA case, and we assume a lossless waveguide excitation component. We analyze the effects of this transmit power loss in Section \ref{sec: results}.

Lastly, as the goal of this work is to analyze the energy efficiency of the DMA compared to the phased array, we define the energy efficiency as a function of the spectral efficiency in \eqref{eq: spec eff} for the DMA $C_{\sf{DMA}}$ and phased array  $C_{\sf{ANT}}$ as \cite{YouEtAlEnergyEfficiencyMaximizationMassive2022}

\begin{equation}\label{eq: EE msa}
    \eta_{\sf{DMA}} = \frac{C_{\sf{DMA}}}{P_{\sf{DMA}}},
\end{equation}

\begin{equation}\label{eq: EE sba}
    \eta_{\sf{ANT}} = \frac{C_{\sf{ANT}}}{P_{\sf{ANT}}}. 
\end{equation}

\noindent Comparing the energy efficiencies for the two scenarios provides a metric to analyze the tradeoff between the lower DMA power consumption and larger achievable beamforming gains for the phased array. We perform simulations to determine this tradeoff in Section \ref{sec: results}.

\begin{table}[]
    \centering
    \caption{Power consumption and component loss in the transmitter architecture model.}
    \begin{tabular}{|c|c|c|}
        \hline
        \textbf{Component} & \textbf{Notation} & \textbf{Value} \\ \hline
    
        Power amplifier efficiency & $\eta_{\sf{PA}}$ & 27\% \\
        Power amplifier & $P_{\sf{PA}}$ & $P_{\sf{PA}} = \frac{P_{\sf{T}}}{\eta_{\sf{PA}}}$ \\
        Phase shifter (active ; passive) & $P_{\sf{PS}}$ & 21.6 ; 0 mW \\
        DAC & $P_{\sf{DAC}}$ & 75.8 mW\\
        Local oscillator & $P_{\sf{LO}}$ & 22.5 mW \\
        RF chain & $P_{\sf{RF}}$ & 31.6 mW \\ 
        Varactor diode & $P_{\sf{VAR}}$ & 0 mW \\ \hline
        Phase shifter (active ; passive) & $L_{\sf{PS}}$ & -2.3 ; 8.8 dB \\
        Two-way power divider & $\bar{L}_{\sf{D}}$ & 0.6 dB \\ 
         \hline
        
    \end{tabular}
    \label{table: power consumption}
\end{table}

\subsection{DMA hierarchical codebook}

Lastly, we develop methods to enable codebook-based beamforming with the designed DMA and DMA analog precoding architecture. Codebook-based beamforming involves creating a discrete list of possible beam patterns. In a wireless scenario, each beam pattern in the codebook is then tested to find the beam pattern that maximizes the receive SNR. A typical codebook design is the DFT codebook, which provides sufficient angular coverage and beamforming gain for a uniform linear array. Our goal is to apply current methods of designing codebooks to the DMA analog precoding architecture with the additional constraint of the DMA weights. We design a codebook for the DMA by mapping the weights of a desired beamformer $\fps$ onto the DMA weights to create $\fdma$. We will implement Euclidean and Lorentzian-constrained modulation for the DMA weights and the proposed phase rotation optimization to compare mapping techniques. We choose to integrate a binary hierarchical DFT codebook with the DMA to leverage the unique zero-weight in the Lorentzian-constrained weights. 

A hierarchical codebook contains multiple layers of codewords that gradually generate narrower beams. Therefore, hierarchical codebooks can lead to a quicker search of the best codeword in the codebook compared to exhaustive search methods \cite{XiaoEtAlHierarchicalCodebookDesignBeamforming2016}. Wider beams are generally created by turning off certain antenna elements, which involves additional switches and control mechanisms from a system design perspective. We analyze DMAs with hierarchical codebooks here as the Lorentzian-constrained weights contain a zero-weight, so no additional switches are required to turn off antennas. Because of this, a hierarchical codebook can leverage the Lorentzian-constrained weights to generate the desired wide beams. As the first layer consists of one single beam pattern, we will analyze the mapping techniques and optimizations for all layers of the hierarchical codebook beyond the first layer.

We generate the binary hierarchical codebook based on the DEACT approach developed in \cite{XiaoEtAlHierarchicalCodebookDesignBeamforming2016}. We first define the array factor for $L$ elements, wavenumber $k$, spacing $d_{\sf{x}}$, and desired steering angle $\Omega$ as

\begin{equation}
    \mathbf{a}(L,\Omega)=\frac{1}{\sqrt{L}}\left[ e^{\sfj k d_{\sf{x}} 0\Omega}, \ldots,e^{\sfj k d_{\sf{x}}(L-1)\Omega} \right]^T.
\end{equation}

\noindent Next, we define a hierarchical codebook layer $r$ to consist of $2^{(r-1)}$ DFT codewords that adequately span angular space through the resulting beamwidths. The number of DFT beams scales with the hierarchical codebook layer, thus lower layers have fewer beams with wider beamwidths, and higher layers have many beams with narrower beamwidths. For the $r$th hierarchical codebook layer and $c$th codeword, the desired antenna weights are given by

\begin{equation}
    \mathbf{f}_{\sf{HCB}}(r,c)=\left[ \mathbf{a}\left( 2^r,-1+\frac{2c-1}{2^r} \right)^T, \mathbf{0}^T_{(L-2^r)-1} \right] ^T.
\end{equation}

\noindent For our DMA scenario of $L=8$, there will be $R=4$ hierarchical codebook layers, with the final layer having $C=L=8$ codewords.

As the DEACT approach is designed for uniform linear arrays in $\mathbf{f}_{\sf{HCB}}(r,c) \in \mathbb{C}^{L\times 1}$, we use the same weights $ \mathbf{f}_{\sf{HCB}}(r,c)$ for the $L$ DMA elements in all $D$ waveguides. We apply $D$ copies of $\mathbf{f}_{\sf{HCB}}(r,c)$ to form $\fps$ for the desired weights as $\fps(r,c)={\sf{vec}}\left( [\mathbf{f}_{\sf{HCB}}(r,c),\ldots,\mathbf{f}_{\sf{HCB}}(r,c)] \right)$. Applying $\mathbf{f}_{\sf{HCB}}(r,c)$ to all waveguides of the DMA creates larger beamforming gains in the desired direction of the DEACT beams through the DMA. We simulate beams along the azimuth direction through the DEACT approach, and leave beamforming in both the elevation and azimuth directions through a codebook design for future work. The entries in $\fps$ are the desired weights $\wbl$, which can be mapped through the Euclidean and Lorentzian-constrained modulation techniques to find $\qbl$. As $\qbl$ are the resulting entries to $\fdma$, we define the final Lorentzian-constrained and Euclidean modulated beamformers as $\fdma^{\sf{LC}}=-\frac{\sfj-e^{\sfj \angle(\fps \odot \mathbf{m}^c)}}{2}$, $\fdma^{\sf{EM}}= -\frac{\sfj+e^{\sfj ( \angle(2(\fps \odot \mathbf{m}^c)+\sfj)-\pi )}}{2}$. In the following section, we analyze the different mapping techniques and their effects on coverage, beamforming gain, and steering accuracy for the different codebook layers.

\section{Results}\label{sec: results}

We now analyze the hierarchical codebook design and mapping techniques with the realistic DMA. We simulate the Euclidean, Lorentzian-constrained and brute-force phase rotation optimization modulation techniques with the designed DMA in the electromagnetic simulation software HFSS to extract the resulting beam patterns. As typical directional base station antennas are designed to cover an angular region of around $120^\circ$ in the azimuth direction \cite{BeckmanLindmarkEvolutionBaseStationAntennas2007}, we analyze the DMA beam patterns primarily for their performance in the azimuth angle region $\phi \in [-60^\circ,60^\circ]$. We assume the DMA is oriented along the x-z plane such that the simulated beam patterns span the azimuth angular region. We also create a realistic phased array for direct comparison with the DMA through a simulated patch antenna array. We first design a patch antenna at the desired resonant frequency of 15 GHz in HFSS. We then create a $4 \times 5$ array with $\frac{\lambda}{2}$ spacing using this element as the realistic phased array. Since the DMA allows for element spacing smaller than half of a wavelength, the $4\times 5$ phased array maintains the same total aperture area as the $8 \times 5$ DMA. For the fourth hierarchical codebook layer, we simulate the desired beamforming angles with the realistic phased array to extract the beam patterns in the same way as the DMA. As the realistic phased array has a directional patch antenna element and the same aperture area as the DMA, this offers a fair comparison for coverage, spectral, and energy efficiency.

We briefly discuss the notation used in the following section. We have two different mapping techniques, Euclidean and Lorentzian-constrained modulation, and the additional mapping optimization with a phase rotation. We denote EM as Euclidean modulation and LC as Lorentzian-constrained modulation to represent the mapping technique used. We then denote NR to represent the case with no phase rotation and BF as the brute-force search optimization method for the phase rotation. As the brute-force method involves a computationally intensive search with electromagnetic simulation results, we only implement this optimization with Euclidean modulation. We further denote PS as the realistic patch antenna array with phase shifters, where A represents active phase shifters and P represents passive phase shifters. We also use HCB to indicate the hierarchical codebook. This notation is summarized in Table \ref{table: notation}.

\begin{table}[h!]
    \centering
    \caption{Notation used to represent simulation results.}
    \begin{tabular}{|c|c|}
        \hline
        HCB & Hierarchical codebook \\
        \hline
        EM: NR  & Euclidean mod.: No phase rotation \\
        \hline
        EM: BF  & Euclidean mod.: Brute-force search \\ \hline
        LC: NR  & Lorentzian-constrained mod.: No phase rotation \\ \hline

        PS: P  & Passive phased array with patch antenna elements \\ \hline
        PS: A  & Active phased array with patch antenna elements \\ \hline
        
    \end{tabular}
    \label{table: notation}
\end{table}

\subsection{DMA with wide beam patterns}

\begin{figure} 
    \centering
  \subfloat[\label{fig: lc layer 2}]{%
       \includegraphics[width=.48\linewidth]{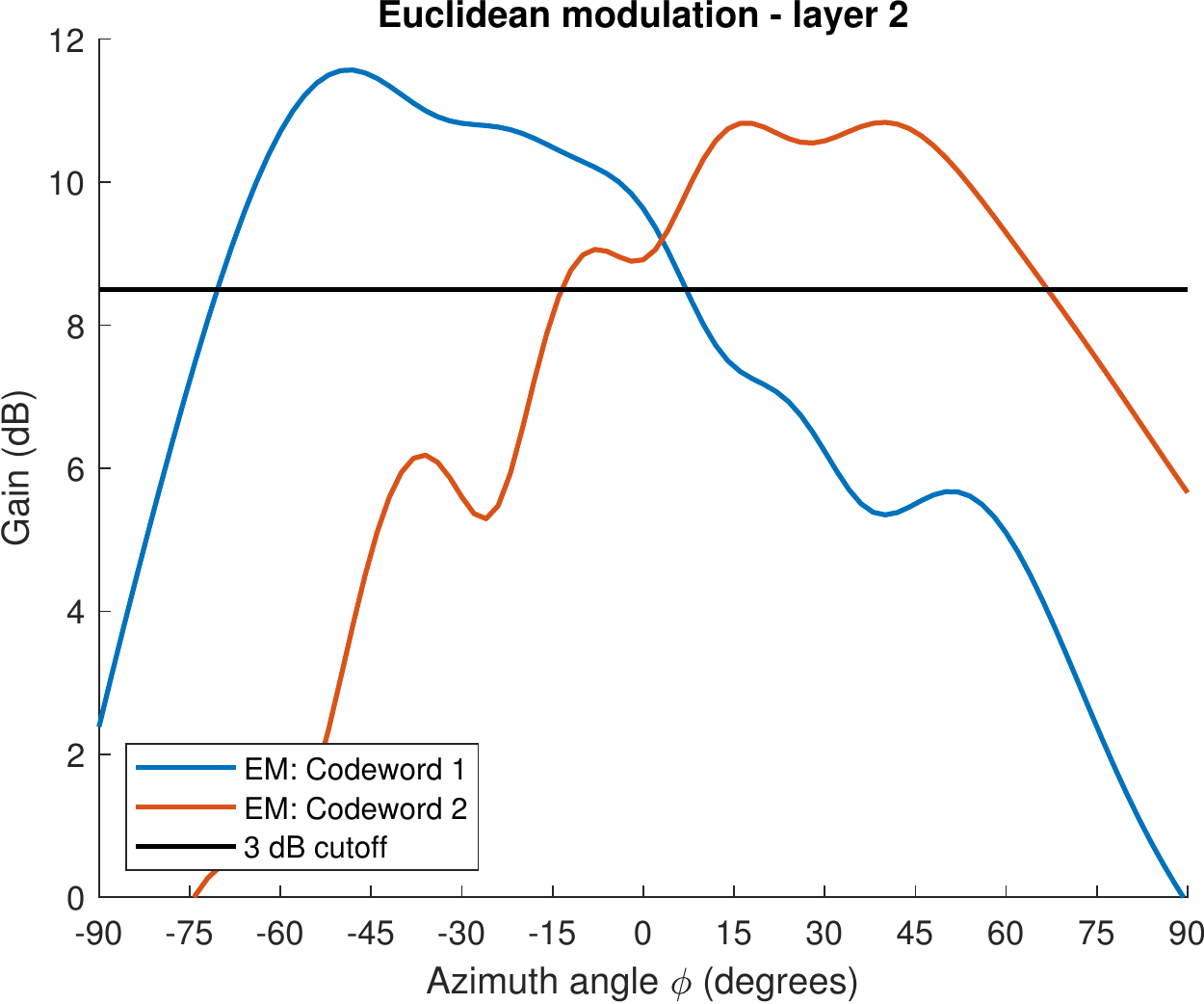}}
    \hfill
  \subfloat[\label{fig: em layer 2}]{%
        \includegraphics[width=.48\linewidth]{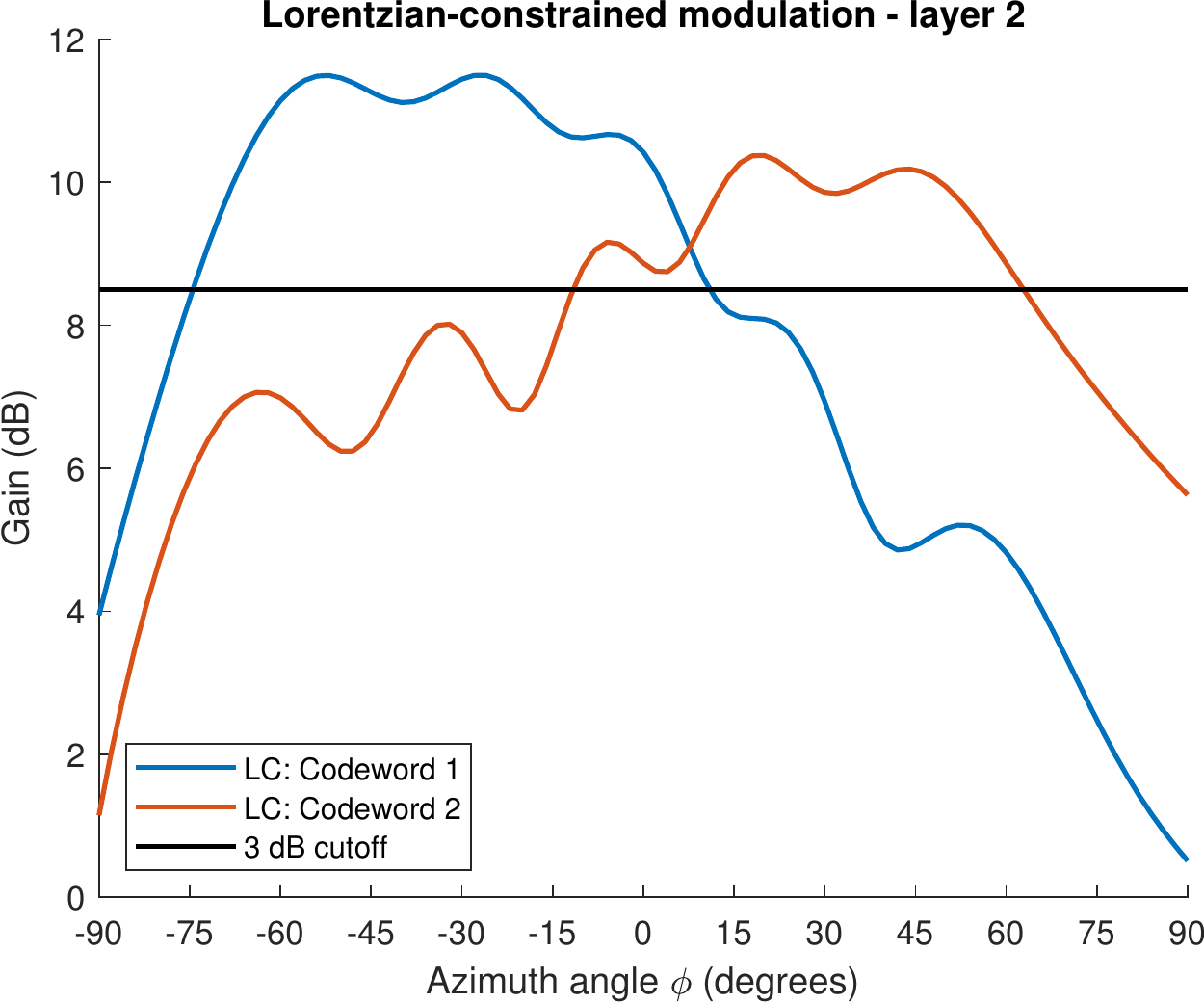}}

  \caption{Simulated beam patterns for the second hierarchical codebook layer using (a) Euclidean modulation and (b) Lorentzian-constrained modulation. We find that both mapping techniques enable large beamwidths for the DMA that adequately span a wide angular range.}
  \label{fig: layer 2} 
\end{figure}

We first analyze the ability of the DMA to provide wide beamwidths for the hierarchical codebook. We determine this through the half-power beamwidth of the DMA beams. For layer 2 of the hierarchical codebook, Fig. \ref{fig: layer 2} displays the HFSS simulated results for the two codeword beams using Euclidean and Lorentzian-constrained modulation. We present the normal mapping techniques here without any phase rotation optimization. We expect the two beams here to adequately span the angular region from $\phi \in [-60^\circ,60^\circ]$, and have beamwidths that intersect near the $\phi=0^\circ$ angle. As both the Euclidean and Lorentzian-constrained modulation beam patterns have a maximum of around $11.5$ dB, we plot a $3$ dB line as well for beamwidth analysis. We find in Fig. \ref{fig: layer 2} that Euclidean modulation provides beamwidths of $76^\circ$ and $80^\circ$ for codewords 1 and 2, which compares closely with Lorentzian-constrained modulation that has beamwidths of $84^\circ$ and $76^\circ$. Therefore, both mapping techniques enable large beamwidths for the two beams that remain above their $3$ dB gain for the desired angular range. We also find that the beam patterns intersect around the $\phi=0^\circ$ value, as desired. We demonstrate in these plots the capability of the DMA to create wide beamwidth beam patterns, which has not been discussed in prior work.

Although the beam patterns from the simulated results provide adequately wide beams for the hierarchical codebook, there is still some distortion in the beam pattern for each codeword. This distortion is mostly due to the weight gap near the zero-weight for the simulated weights in Fig. \ref{fig: dma weight sim}. When implementing the codebook layers, we chose the closest weight to zero since the zero-weight is unavailable. This led to small amounts of radiation from the ``off'' elements that distort the DMA beam. Fortunately, for our DMA design, the weight gap did not significantly degrade the performance of the second hierarchical codebook layer. This weight-gap, however, is not included in typical DMA models, so we highlight its ability to distort wider beamwidths through Fig. \ref{fig: layer 2}.

\subsection{DMA steering accuracy}

\begin{figure} 
    \centering
  \subfloat[\label{fig: em nr layer 4}]{%
       \includegraphics[width=.48\linewidth]{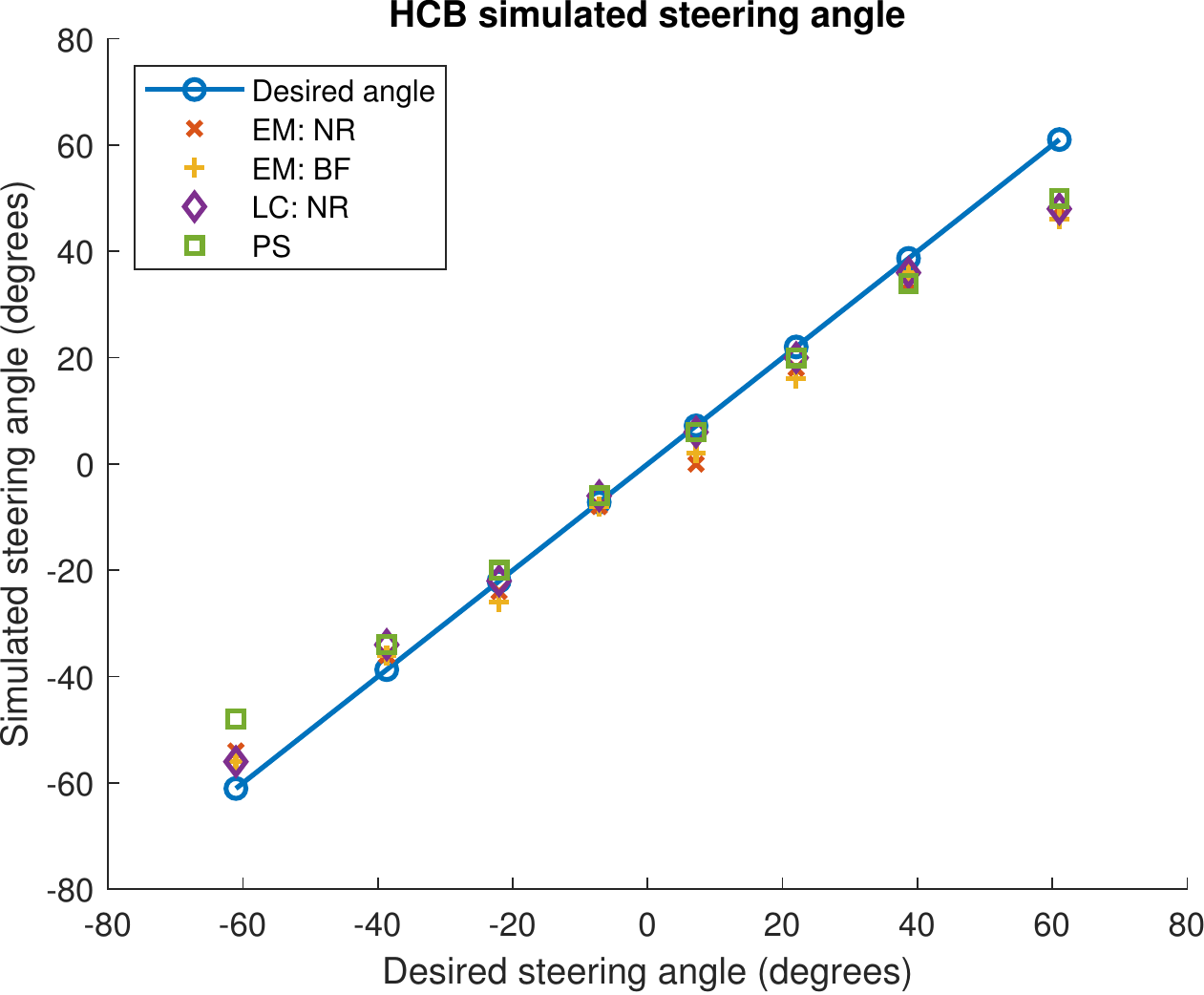}}
       \hfill
  \subfloat[\label{fig: em bf layer 4}]{%
       \includegraphics[width=.48\linewidth]{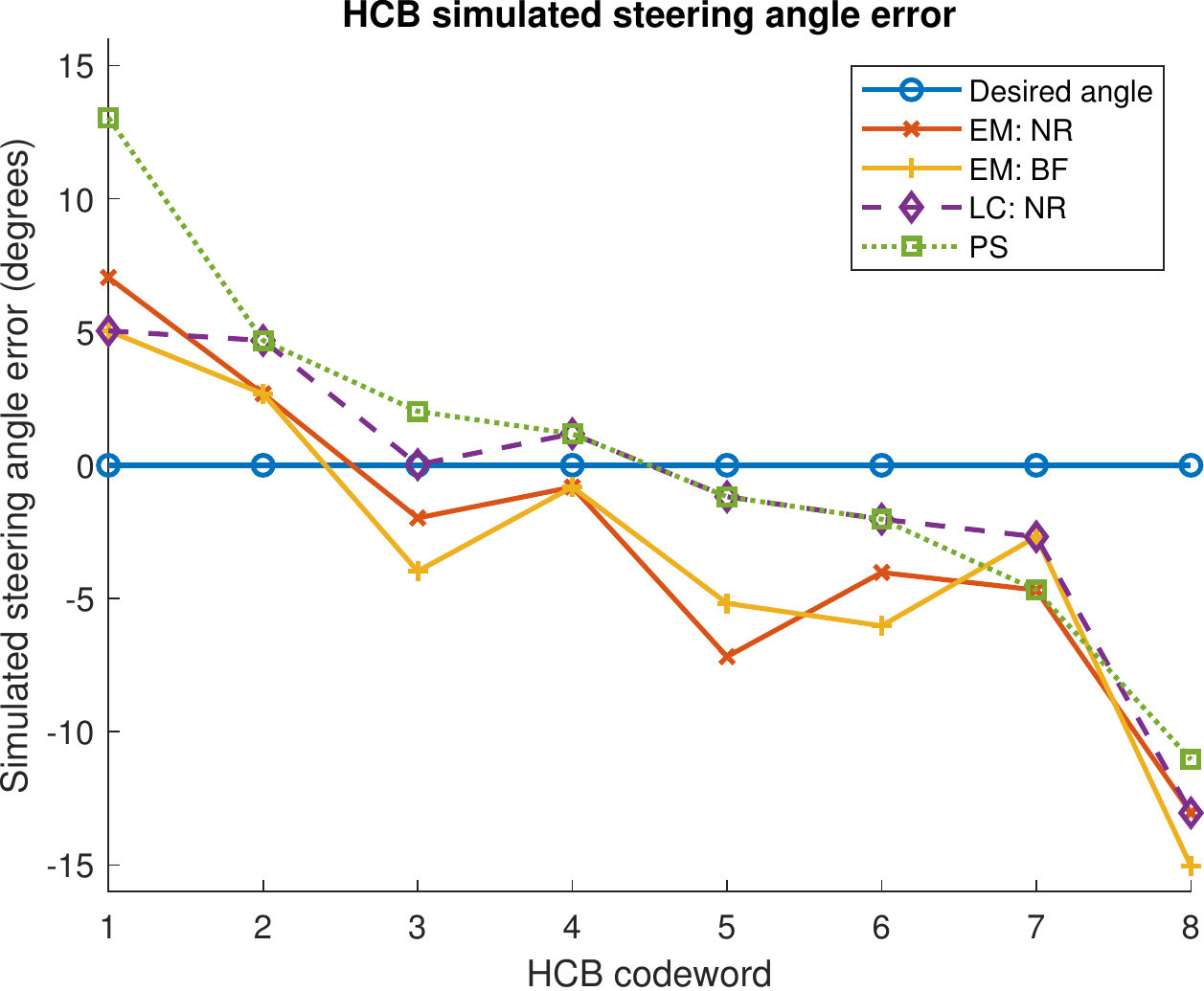}}

  \caption{(a) Desired versus simulated steering angle for the fourth hierarchical codebook layer and (b) steering angle error for each codeword. Lorentzian-constrained modulation generally provides the highest steering accuracy of the DMA mapping techniques. All mapping methods enable accurate beamsteering compared to the phased array.}
  \label{fig: steering angle} 
\end{figure}

We determine the steering accuracy of the DMA by looking at the fourth layer of the hierarchical codebook and the eight beams generated. Fig. \ref{fig: steering angle} displays the desired versus simulated steering angle for the various mappings and the resulting beamsteering error. Overall, we find that the DMA performs well with steering the beam in the correct direction despite the Lorentzian-constrained weights. Moreover, the DMA provides a similar steering accuracy to the patch antenna phased array. Therefore, the Lorentzian-constrained weights do not cause steering accuracy to diminish compared to typical antenna arrays. We note that both the DMA and patch elements have directional beam patterns, which naturally diminishes the beamsteering accuracy towards the end-fire angles. As expected from \cite{BowenEtAlOptimizingPolarizabilityDistributionsMetasurface2022}, Lorentzian-constrained modulation enables the highest steering accuracy as it provides a constant phase difference across the DMA elements.

\subsection{DMA coverage}

\begin{figure} 
    \centering
  \subfloat[\label{fig: hcb layer 2}]{%
       \includegraphics[width=.46\linewidth]{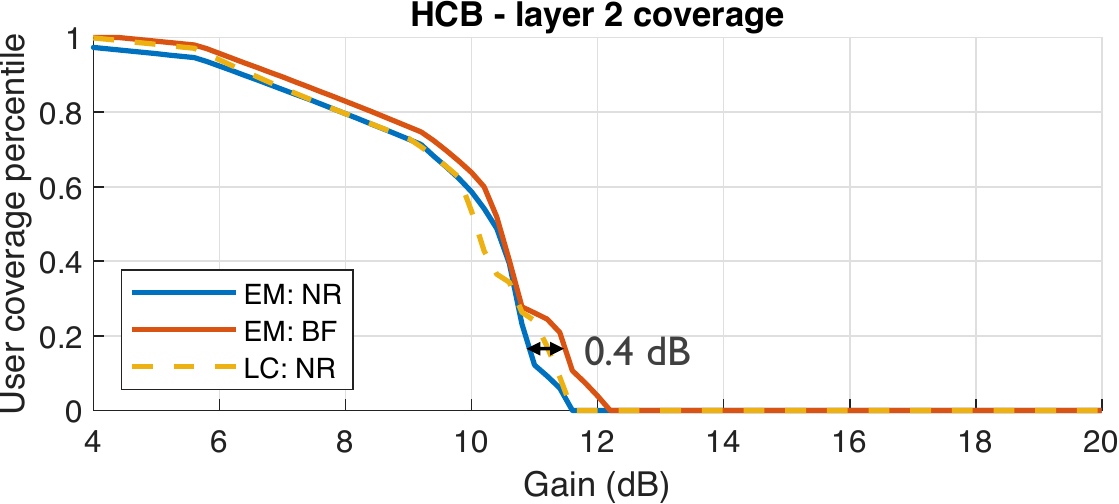}}
    \hfill
  \subfloat[\label{fig: hcb layer 3}]{%
        \includegraphics[width=.46\linewidth]{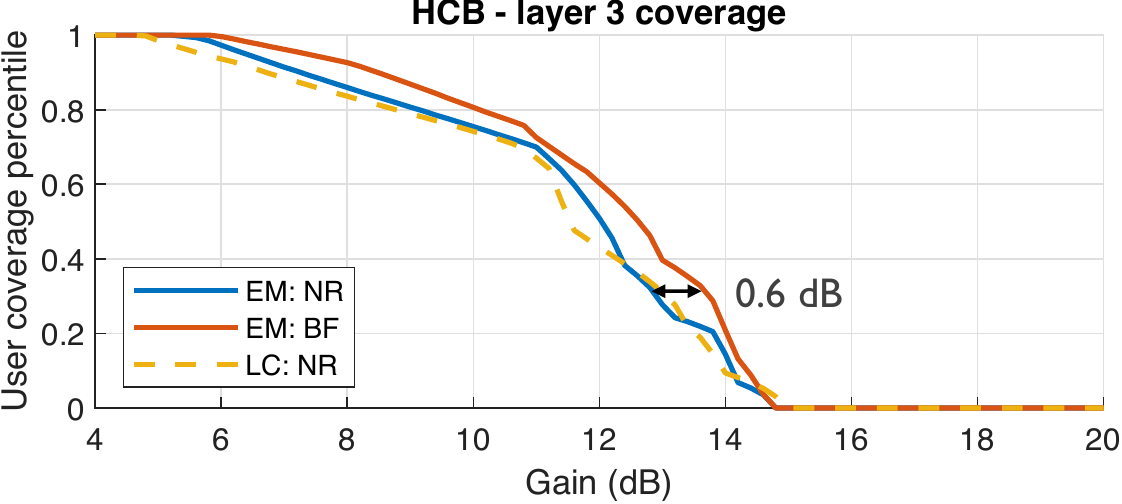}}

  \subfloat[\label{fig: hcb layer 4}]{%
        \includegraphics[width=.46\linewidth]{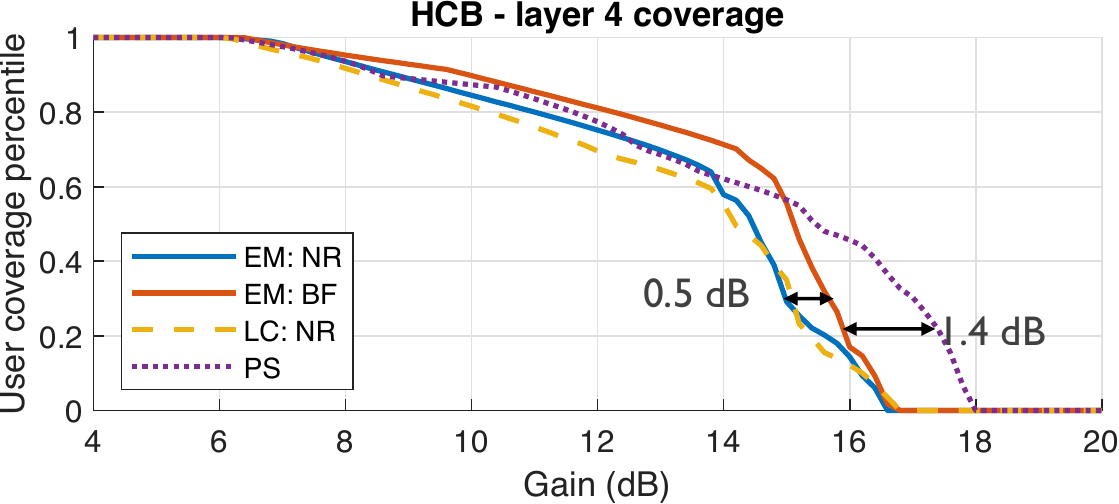}}

  \caption{Coverage analysis of simulated beam patterns for the hierarchical codebook layers. We find that the brute-force search provides the best coverage of the mapping techniques for the DMA.}
  \label{fig: hcb cdf} 
\end{figure}

Next, we analyze the coverage provided by the DMA hierarchical codebook for the different mapping techniques. To determine their effectiveness in spanning the desired angular range of $\phi \in [-60^\circ,60^\circ]$, we use a user coverage percentile curve as $1-\text{CDF}$ of the codebook gain CDF. The curve indicates the codebook coverage via the percentile of users in the desired angular range that could receive a certain beamforming gain. Fig. \ref{fig: hcb cdf} displays these results. In this figure, the curve furthest to the right generally provides the best coverage as a codebook and enables on average the highest beam pattern gain. For the three DMA mapping techniques, we find that the brute-force method provides the best coverage for all three hierarchical codebook layers, as it enables greater beamforming gain to cover the desired angular space. Since the brute-force search optimized the beam pattern through actual simulations rather than the theoretical DMA model, it  guarantees higher beamforming gain over the other mapping methods, leading to better coverage with the DMA beam patterns. We find that the phase rotation optimization can improve coverage compared to the usual mapping techniques when done with a brute-force search over the phase rotations. Though computationally intensive, the brute-force search simulations can be performed offline prior to the deployment of a DMA, and provides a faster method than checking all possible weight combinations to find the best DMA beam pattern.

We now analyze the fourth hierarchical codebook layer coverage with the phased array implementation. As expected, we find that the phased array ultimately provides higher beamforming gain of around 1.4 dB in many of the beam patterns compared to the DMA. This is due to the highly-efficient patch antenna design, where DMAs exhibit loss from the resonant response of each element. Despite this, since patch antennas are slightly more directional than the DMA, for lower gains in the 10-14 dB range, the DMA enables better coverage near the end-fire angles. Although the phased array may generally provide better coverage from increased beamforming gain, this comes at the cost of transmit power loss and increased power consumption. We further analyze the effects of power consumption and overall performance of the phased array and DMA through spectral and energy efficiency results.

\subsection{Spectral efficiency}

We analyze spectral and energy efficiency for the DMA and realistic phased array through the channel simulation software QuaDRiGa \cite{BurkhardtEtAlQuaDRiGaMIMOChannelModel2014}. QuaDRiGa is a sophisticated simulation tool that allows for realistic channel modeling with a variety of settings and parameters. We choose to use this tool as the channel generation model is based on real-world channel measurements and is a tool commonly used by many researchers due to its accuracy and reproducibility. We are also able to easily implement the simulated beam patterns from the designed DMA and phased array into QuaDRiGa to examine the codebook design. We use a 3GPP urban macrocell environment as the simulation channel generation setting. We set a radial distance of 500 meters for each user from the base station DMA. Additionally, we take the number of OFDM subcarriers as $K=64$ for a narrow channel bandwidth of $B = 20$ MHz and assume the simulated gain patterns do not change significantly with frequency over the small bandwidth.

For the simulations, we assume a limited feedback beamforming scheme where each beam pattern in the codebook is tested for the channel and the beam pattern with the highest channel gain is selected. The final output of the simulation is a post-beamforming channel value. We gather this data for the fourth hierarchical codebook layer for all simulated DMA codebooks and the realistic phased array codebook. We define the system SNR for a transmit power $P_{\sf{T}}$ and thermal noise power $N_0$ as $\rho = \frac{P_{\sf{T}}}{B N_0}$. It is important to note that the transmit power will be different for the DMA, active phased array, and passive phased array due to the varying component losses between these three cases. Therefore, each scenario will have different SNR values for calculating spectral efficiency from the simulated channel measurements using \eqref{eq: spec eff}. We compare the scenarios fairly by plotting the resulting codebook spectral efficiency values against the system input power $P_{\sf{IN}}$.

Using the resulting QuaDRiGa channel simulations, the input-output transmit power relation \eqref{eq: trans power}, and the channel spectral efficiency equation  \eqref{eq: spec eff}, we show the spectral efficiency values against the input power in Fig. \ref{fig: spec eff}. As expected from the coverage results, we find that the brute-force method for Euclidean modulation provides the highest spectral efficiency of the DMA codebooks. We also see that the best-performing codebook is the realistic phased array with active phase shifters, resulting in a 3.6 dB input power difference with the brute-force DMA method. This is anticipated since the active phase shifters give additional gain to the transmit power signal (2.3 dB), and the realistic phased array beam patterns generally provide greater beamforming gain than the DMA. Naturally, this case also consumes the most power, which we will analyze in the next section through energy efficiency. The worst-performing case is the realistic phased array with high-resolution passive phase shifters, allowing for a 7.4 dB input power difference with the brute-force DMA method. This is due to the significant component loss (8.8 dB) from the phase shifters, which attenuates the input power signal and gives lower SNR values. We can conclude from these results that the designed DMA enables higher spectral efficiency values than the traditional phased array equipped with high-resolution passive phase shifters. We also see that the proposed phase rotation optimization through brute-force simulation results provides around a 0.4 dB increase in spectral efficiency in terms of the input power. Although small, this optimization can be used to increase the achievable spectral efficiency of a DMA with no tradeoffs on any other system performance metric. 

\begin{figure} 
    \centering
  \subfloat[\label{fig: spec eff}]{%
       \includegraphics[width=.48\linewidth]{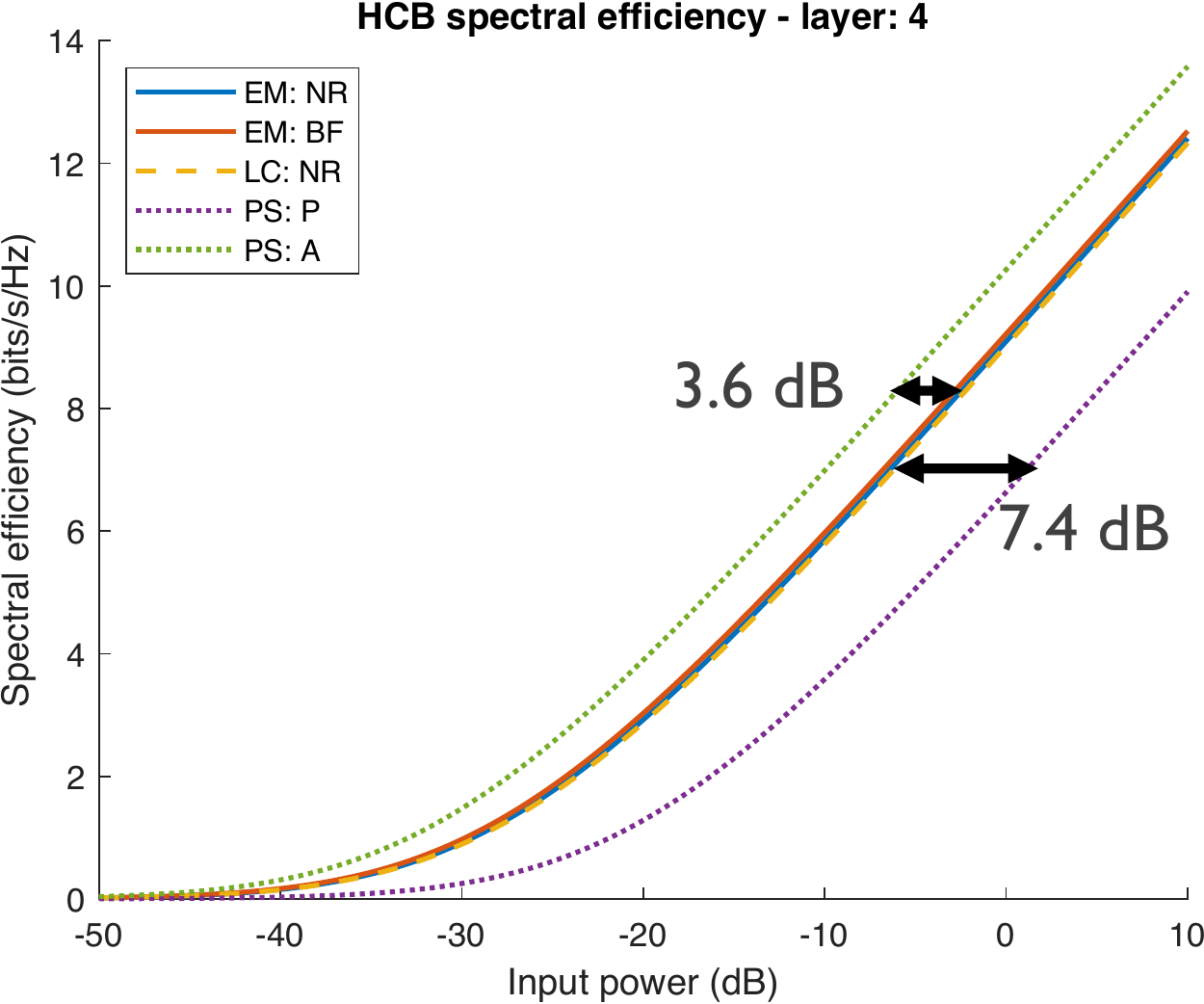}}
       \hfill
  \subfloat[\label{fig: en eff}]{%
       \includegraphics[width=.48\linewidth]{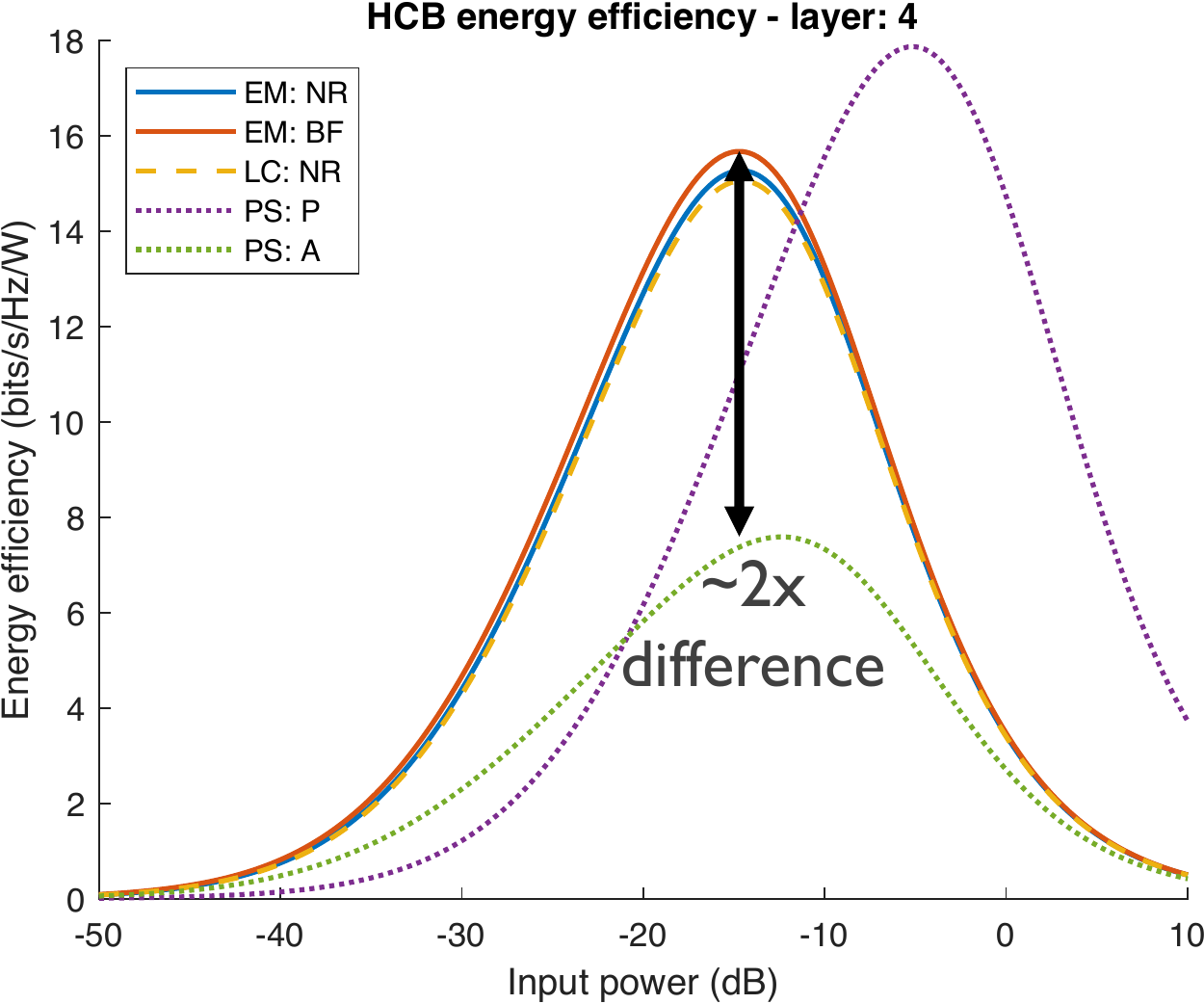}}

  \caption{Simulated results with the DMA and phased array for (a) spectral efficiency and (b) energy efficiency. We find that the DMA outperforms a traditional passive phased array due to the significant transmit power attenuation for the high-resolution phase shifter. While the active phased array provides the highest spectral efficiency, its large power consumption makes it the least energy-efficient option.}
  \label{fig: eff} 
\end{figure}

\subsection{Energy efficiency}

Lastly, we analyze the energy efficiency of the DMA and realistic phased array codebooks using \eqref{eq: EE msa} and \eqref{eq: EE sba}. Fig. \ref{fig: en eff} shows the energy efficiency results from the spectral efficiency and power consumption calculations. While the realistic phased array with active phase shifters yielded the highest spectral efficiency, we see here that the increased power consumption for this case causes the lowest energy efficiency values across the input power. Therefore, the significant power consumption savings due to the reconfigurable component and fewer power dividers causes the DMA to achieve around twice the energy efficiency of the active phased array at the peak energy efficiency value. While the passive phased array curve actually provides better energy efficiency than the DMA at higher input power values, this is mostly due to component loss. As the passive phase shifters attenuate the input power by 8.8 dB, the power consumption due to power amplifiers is significantly reduced for this case, leading to an overall lower power consumption than the DMA. The passive phased array case, however, still provides lower spectral efficiency than the DMA by around 7.4 dB in terms of input power. Thus, its energy efficiency does not make up for this significant gap in spectral efficiency performance. The spectral and energy efficiency results in Fig. \ref{fig: spec eff} and \ref{fig: en eff} show the capabilities of the DMA to form energy-efficient arrays that maintain or surpass the performance of typical phased arrays.

\section{Conclusions}\label{sec: conclusion}

In this paper, we investigated the use of DMAs with a hierarchical codebook and proposed a novel technique for enhancing the beamforming gain compared to current methods. Through the implementation of simulated DMA beam patterns and channel environments, we found that the proposed mapping technique of a brute-force phase rotation search provides better coverage and spectral efficiency than the unoptimized Euclidean and Lorentzian-constrained modulation. Moreover, in terms of spectral efficiency, we found that the DMA significantly outperforms a traditional phased array with passive phase shifters due to high component loss. While providing lower spectral efficiency than an active phased array, the DMA results in greater energy efficiency due to the increased power consumption from active phase shifters. Therefore, DMAs are shown as a suitable replacement for phased arrays in situations where power consumption should be minimal. In future work, we plan to develop novel precoding techniques further optimized around realistic DMA effects. This includes aspects like element perturbation of the waveguide, DMA weight tuning constraints, and refining the DMA geometry. Furthermore, we aim to revisit many MIMO applications, such as channel estimation or localization, with the DMA weight constraint to analyze potential benefits and tradeoffs for using DMAs in these scenarios.

\section*{Appendix}

\subsection{Proof of Lemma 1}
For the Euclidean modulation equation described in Section \ref{sec: DMA mapping} as $q^{\sf{EM}} = \argmin\limits_{q \in \mQ}|q-\tilde{w}|^2$, we substitute and solve for the DMA weight angle $\psi$ as

\begin{equation}
    \psi^{\sf{EM}} = \argmin\limits_{\psi \in [0,2\pi)}|-\frac{\sfj+e^{\sfj\psi}}{2}-\tilde{w}|^2.
\end{equation}

\noindent We further define $\bar{w}=2 \tilde{w}+\sfj$, so that ${\psi}^{\sf{EM}}$ becomes

\begin{equation}
\begin{split}
    \psi^{\sf{EM}} & = \argmin\limits_{\psi \in [0,2\pi)}|e^{\sfj \psi}-\bar{w}|^2 \\
    & = \argmin\limits_{\psi \in [0,2\pi)}\operatorname{Re}\left\{e^{-\sfj \psi}\bar{w} \right\}.
\end{split}
\end{equation}

\noindent Then, for $\bar{w} = |\bar{w}|e^{\sfj \angle{\bar{w}}}$, we have $\operatorname{Re}\left\{e^{-\sfj \psi}\bar{w} \right\} = |\bar{w}| \operatorname{Re}\left\{e^{ \sfj \left(\angle{\bar{w}} - \psi \right)} \right\}=|\bar{w}|\cos\left(\angle{\bar{w}} - \psi \right)$. The maximum value of 1 for $|\cos\left(\angle{\bar{w}} - \psi \right)|$ is achieved for the argument $\left(\angle{\bar{w}} - \psi \right)=\pi+2\pi k, k\in \mathbb{Z}$. Therefore, we set $\psi=\angle{\bar{w}}-\pi$ to find the Euclidean modulation weight. From the substitutions for $\psi$ and $\bar{w}$, the final DMA weight is then $q^{\sf{EM}} = -\frac{\sfj+e^{ \sfj ( \angle (2 \tilde{w}+\sfj )-\pi ) }}{2}$.

\bibliographystyle{IEEEtran}

\bibliography{refs}

\begin{thebibliography}{10}
\providecommand{\url}[1]{#1}
\csname url@samestyle\endcsname
\providecommand{\newblock}{\relax}
\providecommand{\bibinfo}[2]{#2}
\providecommand{\BIBentrySTDinterwordspacing}{\spaceskip=0pt\relax}
\providecommand{\BIBentryALTinterwordstretchfactor}{4}
\providecommand{\BIBentryALTinterwordspacing}{\spaceskip=\fontdimen2\font plus
\BIBentryALTinterwordstretchfactor\fontdimen3\font minus
  \fontdimen4\font\relax}
\providecommand{\BIBforeignlanguage}[2]{{%
\expandafter\ifx\csname l@#1\endcsname\relax
\typeout{** WARNING: IEEEtran.bst: No hyphenation pattern has been}%
\typeout{** loaded for the language `#1'. Using the pattern for}%
\typeout{** the default language instead.}%
\else
\language=\csname l@#1\endcsname
\fi
#2}}
\providecommand{\BIBdecl}{\relax}
\BIBdecl

\bibitem{SmithEtAlAnalysisWaveguideFedMetasurfaceAntenna2017}
D.~R. Smith, O.~Yurduseven, L.~P. Mancera, P.~Bowen, and N.~B. Kundtz,
  ``Analysis of a waveguide-fed metasurface antenna,'' \emph{Physical Review
  Applied}, vol.~8, no.~5, p. 054048, Jun. 2017.

\bibitem{8503216}
A.~Mezghani and R.~W. Heath, ``The information and wave-theoretic limits of
  analog beamforming,'' in \emph{Proc. of 2018 Inf. Theory Appl. Workshop
  (ITA)}, 2018, pp. 1--6.

\bibitem{BoyarskyEtAlElectronicallySteeredMetasurfaceAntenna2021}
M.~Boyarsky, T.~Sleasman, M.~F. Imani, J.~N. Gollub, and D.~R. Smith,
  ``Electronically steered metasurface antenna,'' \emph{Scientific reports},
  vol.~11, no.~1, pp. 1--10, Feb. 2021.

\bibitem{MarzettaNoncooperativeCellularWirelessUnlimited2010}
T.~L. Marzetta, ``Noncooperative cellular wireless with unlimited numbers of
  base station antennas,'' \emph{IEEE Trans. Wireless Commun.}, vol.~9, no.~11,
  pp. 3590--3600, Nov. 2010.

\bibitem{LarssonEtAlMassiveMIMONextGeneration2014}
E.~G. Larsson, O.~Edfors, F.~Tufvesson, and T.~L. Marzetta, ``Massive {MIMO}
  for next generation wireless systems,'' \emph{IEEE Commun. Mag.}, vol.~52,
  no.~2, pp. 186--195, Feb. 2014.

\bibitem{GaoEtAlEnergyEfficientHybridAnalogDigital2016}
X.~Gao, L.~Dai, S.~Han, C.-L. I, and R.~W. Heath, ``Energy-efficient hybrid
  analog and digital precoding for {mmWave} {MIMO} systems with large antenna
  arrays,'' \emph{IEEE J. Sel. Areas Commun.}, vol.~34, no.~4, pp. 998--1009,
  Mar. 2016.

\bibitem{HeathEtAlOverviewSignalProcessingTechniques2016}
R.~W. Heath, N.~Gonzalez-Prelcic, S.~Rangan, W.~Roh, and A.~M. Sayeed, ``An
  overview of signal processing techniques for millimeter wave {MIMO}
  systems,'' \emph{IEEE J. Sel. Topics Signal Process.}, vol.~10, no.~3, pp.
  436--453, Feb. 2016.

\bibitem{XiaoEtAlSurveyMillimeterWaveBeamformingEnabled2022}
Z.~Xiao, L.~Zhu, Y.~Liu, P.~Yi, R.~Zhang, X.-G. Xia, and R.~Schober, ``A survey
  on millimeter-wave beamforming enabled uav communications and networking,''
  \emph{IEEE Commun. Surveys Tuts.}, vol.~24, no.~1, pp. 557--610, 2022.

\bibitem{RusekEtAlScalingMIMOOpportunitiesChallenges2013}
F.~Rusek, D.~Persson, B.~K. Lau, E.~G. Larsson, T.~L. Marzetta, O.~Edfors, and
  F.~Tufvesson, ``Scaling up {MIMO}: Opportunities and challenges with very
  large arrays,'' \emph{IEEE Signal Process. Mag.}, vol.~30, no.~1, pp. 40--60,
  Jan. 2013.

\bibitem{BjornsonEtAlMassiveMIMOSystemsNonIdeal2014}
E.~Björnson, J.~Hoydis, M.~Kountouris, and M.~Debbah, ``Massive {MIMO} systems
  with non-ideal hardware: Energy efficiency, estimation, and capacity
  limits,'' \emph{IEEE Trans. Inf. Theory}, vol.~60, no.~11, pp. 7112--7139,
  Nov. 2014.

\bibitem{YangEtAlHardwareConstrainedMillimeterWaveSystems5G2019}
X.~Yang, M.~Matthaiou, J.~Yang, C.-K. Wen, F.~Gao, and S.~Jin,
  ``Hardware-constrained millimeter-wave systems for {5G}: Challenges,
  opportunities, and solutions,'' \emph{IEEE Commun. Mag.}, vol.~57, no.~1, pp.
  44--50, Jan. 2019.

\bibitem{ParkEtAlDynamicSubarraysHybridPrecoding2017}
S.~Park, A.~Alkhateeb, and R.~W. Heath, ``Dynamic subarrays for hybrid
  precoding in wideband {mmWave} {MIMO} systems,'' \emph{IEEE Trans. Wireless
  Commun.}, vol.~16, no.~5, pp. 2907--2920, Mar. 2017.

\bibitem{AhmedEtAlSurveyHybridBeamformingTechniques2018}
I.~Ahmed, H.~Khammari, A.~Shahid, A.~Musa, K.~S. Kim, E.~De~Poorter, and
  I.~Moerman, ``A survey on hybrid beamforming techniques in {5G}: Architecture
  and system model perspectives,'' \emph{IEEE Commun. Surveys Tuts.}, vol.~20,
  no.~4, pp. 3060--3097, 2018.

\bibitem{AlkhateebEtAlMIMOPrecodingCombiningSolutions2014}
A.~Alkhateeb, J.~Mo, N.~Gonzalez-Prelcic, and R.~W. Heath, ``{MIMO} precoding
  and combining solutions for millimeter-wave systems,'' \emph{IEEE Commun.
  Mag.}, vol.~52, no.~12, pp. 122--131, 2014.

\bibitem{HanEtAlLargescaleAntennaSystemsHybrid2015}
S.~Han, I.~Chih-Lin, Z.~Xu, and C.~Rowell, ``Large-scale antenna systems with
  hybrid analog and digital beamforming for millimeter wave {5G},'' \emph{IEEE
  Commun. Mag.}, vol.~53, no.~1, pp. 186--194, Jan. 2015.

\bibitem{RialEtAlHybridMIMOArchitecturesMillimeter2016}
R.~M{\'e}ndez-Rial, C.~Rusu, N.~Gonz{\'a}lez-Prelcic, A.~Alkhateeb, and R.~W.
  Heath, ``Hybrid {MIMO} architectures for millimeter wave communications:
  Phase shifters or switches?'' \emph{IEEE access}, vol.~4, pp. 247--267, Jan.
  2016.

\bibitem{SleasmanEtAlWaveguideFedTunableMetamaterialElement2016}
T.~Sleasman, M.~F. Imani, W.~Xu, J.~Hunt, T.~Driscoll, M.~S. Reynolds, and
  D.~R. Smith, ``Waveguide-fed tunable metamaterial element for dynamic
  apertures,'' \emph{IEEE Antennas Wireless Propag. Lett.}, vol.~15, pp.
  606--609, Jul. 2015.

\bibitem{HauptLanaganReconfigurableAntennas2013}
R.~L. Haupt and M.~Lanagan, ``Reconfigurable antennas,'' \emph{IEEE Antennas
  Propag. Mag.}, vol.~55, no.~1, pp. 49--61, Feb. 2013.

\bibitem{CostantineEtAlReconfigurableAntennasDesignApplications2015}
J.~Costantine, Y.~Tawk, S.~E. Barbin, and C.~G. Christodoulou, ``Reconfigurable
  antennas: Design and applications,'' \emph{Proc. IEEE}, vol. 103, no.~3, pp.
  424--437, Mar. 2015.

\bibitem{BehdadSarabandiDualbandReconfigurableAntennaVery2006}
N.~Behdad and K.~Sarabandi, ``Dual-band reconfigurable antenna with a very wide
  tunability range,'' \emph{IEEE Trans. Antennas Propag.}, vol.~54, no.~2, pp.
  409--416, Feb. 2006.

\bibitem{PanagamuwaEtAlFrequencyBeamReconfigurableAntenna2006}
C.~Panagamuwa, A.~Chauraya, and J.~Vardaxoglou, ``Frequency and beam
  reconfigurable antenna using photoconducting switches,'' \emph{IEEE Trans.
  Antennas Propag.}, vol.~54, no.~2, pp. 449--454, Feb. 2006.

\bibitem{PiazzaEtAlDesignEvaluationReconfigurableAntenna2008a}
D.~Piazza, N.~J. Kirsch, A.~Forenza, R.~W. Heath, and K.~R. Dandekar, ``Design
  and evaluation of a reconfigurable antenna array for {MIMO} systems,''
  \emph{IEEE Trans. Antennas Propag.}, vol.~56, no.~3, pp. 869--881, Mar. 2008.

\bibitem{MazloumanEtAlPatternReconfigurableSquareRing2011}
S.~J. Mazlouman, M.~Soleimani, A.~Mahanfar, C.~Menon, and R.~Vaughan, ``Pattern
  reconfigurable square ring patch antenna actuated by hemispherical dielectric
  elastomer,'' \emph{Electron. Lett.}, vol.~47, no.~3, pp. 164--165, Feb. 2011.

\bibitem{DixitPourushRadiationCharacteristicsSwitchableFerrite2000}
L.~Dixit and P.~Pourush, ``Radiation characteristics of switchable ferrite
  microstrip array antenna,'' \emph{IEE Proceedings-Microwaves, Antennas
  Propag.}, vol. 147, no.~2, pp. 151--155, Apr. 2000.

\bibitem{LiuLangleyLiquidCrystalTunableMicrostrip2008}
L.~Liu and R.~Langley, ``Liquid crystal tunable microstrip patch antenna,''
  \emph{Electron. Lett.}, vol.~44, no.~20, p.~1, Sept. 2008.

\bibitem{shlezinger2021dynamic}
N.~Shlezinger, G.~C. Alexandropoulos, M.~F. Imani, Y.~C. Eldar, and D.~R.
  Smith, ``Dynamic metasurface antennas for {6G} extreme massive {MIMO}
  communications,'' \emph{IEEE Wireless Commun.}, vol.~28, no.~2, pp. 106--113,
  Jan. 2021.

\bibitem{BowenEtAlOptimizingPolarizabilityDistributionsMetasurface2022}
P.~T. Bowen, M.~Boyarsky, L.~M. Pulido-Mancera, D.~R. Smith, O.~Yurduseven, and
  M.~Sazegar, ``Optimizing polarizability distributions for metasurface
  apertures with lorentzian-constrained radiators,'' \emph{arXiv preprint
  arXiv:2205.02747}, May 2022.

\bibitem{BoyarskyEtAlGratingLobeSuppressionMetasurface2020}
M.~Boyarsky, M.~F. Imani, and D.~R. Smith, ``Grating lobe suppression in
  metasurface antenna arrays with a waveguide feed layer,'' \emph{Optics
  Express}, vol.~28, no.~16, pp. 23\,991--24\,004, Jul. 2020.

\bibitem{JohnsonEtAlSidelobeCancelingReconfigurableHolographic2015}
M.~C. Johnson, S.~L. Brunton, N.~B. Kundtz, and J.~N. Kutz, ``Sidelobe
  canceling for reconfigurable holographic metamaterial antenna,'' \emph{IEEE
  Trans. Antennas Propag.}, vol.~63, no.~4, pp. 1881--1886, Feb. 2015.

\bibitem{ShlezingerEtAlDynamicMetasurfaceAntennasUplink2019}
N.~Shlezinger, O.~Dicker, Y.~C. Eldar, I.~Yoo, M.~F. Imani, and D.~R. Smith,
  ``Dynamic metasurface antennas for uplink massive {MIMO} systems,''
  \emph{IEEE Trans. Commun.}, vol.~67, no.~10, pp. 6829--6843, Jul. 2019.

\bibitem{WangEtAlDynamicMetasurfaceAntennasMIMOOFDM2021}
H.~Wang, N.~Shlezinger, Y.~C. Eldar, S.~Jin, M.~F. Imani, I.~Yoo, and D.~R.
  Smith, ``Dynamic metasurface antennas for {MIMO-OFDM} receivers with
  bit-limited {ADCs},'' \emph{IEEE Trans. Commun.}, vol.~69, no.~4, pp.
  2643--2659, Apr. 2021.

\bibitem{YouEtAlEnergyEfficiencyMaximizationMassive2022}
L.~You, J.~Xu, G.~C. Alexandropoulos, J.~Wang, W.~Wang, and X.~Gao, ``Energy
  efficiency maximization of massive {MIMO} communications with dynamic
  metasurface antennas,'' \emph{IEEE Trans. Wireless Commun.}, Aug. 2022.

\bibitem{HuangEtAlJointMicrostripSelectionBeamforming2022}
W.~Huang, H.~Zhang, N.~Shlezinger, and Y.~C. Eldar, ``Joint microstrip
  selection and beamforming design for {mmWave} systems with dynamic
  metasurface antennas,'' \emph{arXiv preprint arXiv:2210.12390}, Oct. 2022.

\bibitem{FengEtAlSurveyEnergyefficientWirelessCommunications2013}
D.~Feng, C.~Jiang, G.~Lim, L.~J. Cimini, G.~Feng, and G.~Y. Li, ``A survey of
  energy-efficient wireless communications,'' \emph{IEEE Commun. Surveys
  Tuts.}, vol.~15, no.~1, pp. 167--178, Feb. 2013.

\bibitem{PrasadEtAlEnergyEfficiencyMassiveMIMOBased2017}
K.~N. R. S.~V. Prasad, E.~Hossain, and V.~K. Bhargava, ``Energy efficiency in
  massive {MIMO}-based {5G} networks: Opportunities and challenges,''
  \emph{IEEE Wireless Commun.}, vol.~24, no.~3, pp. 86--94, Jan. 2017.

\bibitem{HaEtAlEnergyEfficiencyAnalysisCircuit2013}
D.~Ha, K.~Lee, and J.~Kang, ``Energy efficiency analysis with circuit power
  consumption in massive {MIMO} systems,'' in \emph{Proc. of 2013 IEEE 24th
  Annu. Int. Symp. PIMRC}, Nov. 2013, pp. 938--942.

\bibitem{DongEtAlEnergyEfficiencyAnalysisCircuit2016}
S.~Dong, Y.~Wang, L.~Jiang, and Y.~Chen, ``Energy efficiency analysis with
  circuit power consumption in downlink large-scale multiple antenna systems,''
  in \emph{Proc. of 2016 IEEE 83rd VTC}, Jul. 2016, pp. 1--5.

\bibitem{BjornsonEtAlOptimalDesignEnergyEfficientMultiUser2015}
E.~Björnson, L.~Sanguinetti, J.~Hoydis, and M.~Debbah, ``Optimal design of
  energy-efficient multi-user {MIMO} systems: Is massive {MIMO} the answer?''
  \emph{IEEE Trans. Wireless Commun.}, vol.~14, no.~6, pp. 3059--3075, Feb.
  2015.

\bibitem{RibeiroEtAlEnergyEfficiencyMmWaveMassive2018}
L.~N. Ribeiro, S.~Schwarz, M.~Rupp, and A.~L. de~Almeida, ``Energy efficiency
  of {mmWave} massive {MIMO} precoding with low-resolution {DACs},'' \emph{IEEE
  J. of Sel. Topics Signal Process.}, vol.~12, no.~2, pp. 298--312, Apr. 2018.

\bibitem{TsaiNatarajan60GHzPassiveActiveRFpath2009}
M.-D. Tsai and A.~Natarajan, ``{60GHz} passive and active {RF}-path phase
  shifters in silicon,'' in \emph{Proc. of 2009 IEEE RFIC Symp.}, Jun. 2009,
  pp. 223--226.

\bibitem{ZhouOhashiEfficientCodebookbasedMIMOBeamforming2012}
L.~Zhou and Y.~Ohashi, ``Efficient codebook-based {MIMO} beamforming for
  millimeter-wave {WLANs},'' in \emph{Proc. of 2012 IEEE 23rd Int. Symp.
  {PIMRC}}, 2012, pp. 1885--1889.

\bibitem{UwaechiaMahyuddinComprehensiveSurveyMillimeterWave2020}
A.~N. Uwaechia and N.~M. Mahyuddin, ``A comprehensive survey on millimeter wave
  communications for fifth-generation wireless networks: Feasibility and
  challenges,'' \emph{IEEE Access}, vol.~8, pp. 62\,367--62\,414, 2020.

\bibitem{Perez-AdanEtAlIntelligentReflectiveSurfacesWireless2021}
D.~P{\'e}rez-Ad{\'a}n, {\'O}.~Fresnedo, J.~P. Gonz{\'a}lez-Coma, and
  L.~Castedo, ``Intelligent reflective surfaces for wireless networks: An
  overview of applications, approached issues, and open problems,''
  \emph{Electronics}, vol.~10, no.~19, p. 2345, 2021.

\bibitem{JrLozanoFoundationsMIMOCommunication2018}
R.~W. Heath~Jr and A.~Lozano, \emph{Foundations of {MIMO} communication}.\hskip
  1em plus 0.5em minus 0.4em\relax Cambridge University Press, 2018.

\bibitem{XiaoEtAlHierarchicalCodebookDesignBeamforming2016}
Z.~Xiao, T.~He, P.~Xia, and X.-G. Xia, ``Hierarchical codebook design for
  beamforming training in millimeter-wave communication,'' \emph{IEEE Trans.
  Wireless Commun.}, vol.~15, no.~5, pp. 3380--3392, May 2016.

\bibitem{BeckmanLindmarkEvolutionBaseStationAntennas2007}
C.~Beckman and B.~Lindmark, ``The evolution of base station antennas for mobile
  communications,'' in \emph{Proc. of 2007 ICEAA}, Sept. 2007, pp. 85--92.

\bibitem{BurkhardtEtAlQuaDRiGaMIMOChannelModel2014}
F.~Burkhardt, S.~Jaeckel, E.~Eberlein, and R.~Prieto-Cerdeira, ``{QuaDRiGa}: A
  {MIMO} channel model for land mobile satellite,'' in \emph{Proc. of {EuCAP}
  2014}, Apr. 2014, pp. 1274--1278.

\end{thebibliography}

\end{document}